\newtheorem{defi}{Definition}
\newtheorem{prop}{Proposition}
\newtheorem{conje}{Conjecture}
\newtheorem{obs}{Observation}
\newcommand{\ket}[1]{\left\vert #1\right\rangle}
\renewcommand{\eqref}[1]{Eq.~(\ref{#1})}
\newcommand{\floor}[1]{\left\lfloor #1 \right\rfloor}
\def\e{\mathrm{e}}
\DeclareMathOperator{\hiH}{\mathcal{H}}
\newcommand{\1}{\ensuremath{\mathbbm{1}}}
\newcommand\LS{\mathrm{LS}}
\newcommand\QLS{\mathrm{QLS}}
\newcommand\LC{\mathrm{LC}}
\newcommand\OA{\mathrm{OA}}
\newcommand\QOA{\mathrm{QOA}}
\newcommand\AME{\mathrm{AME}}
\newcommand\IrOA{\mathrm{IrOA}}
\begin{document}
\title{Entanglement and quantum combinatorial designs}
\author{Dardo Goyeneche}
\affiliation{Institute of Physics, Jagiellonian University, 30-348 Krak\'ow, Poland}
\affiliation{Faculty of Applied Physics and Mathematics, Technical University of Gda\'{n}sk, 80-233 Gda\'{n}sk, Poland}
\affiliation{Departamento de F\'{i}sica, Facultad de Ciencias B\'{a}sicas, Universidad de Antofagasta, Casilla 170, Antofagasta, Chile}
\author{Zahra Raissi}
\affiliation{ICFO-Institut de Ciencies Fotoniques, The Barcelona Institute of Science and Technology, Castelldefels (Barcelona), 08860, Spain}
\author{Sara Di Martino}
\affiliation{F\'isica Te\`orica: Informaci\'o i Fen\`omens Qu\`antics, Departament de F\'isica, Universitat Aut\`onoma de Barcelona, 08193 Bellaterra (Barcelona), Spain}
\author{Karol {\.Z}yczkowski}
\affiliation{Institute of Physics, Jagiellonian University, 30-348 Krak\'ow, Poland}
\affiliation{Center for Theoretical Physics, Polish Academy of Sciences, 02-668 Warsaw, Poland}

\date{April 9, 2018}

\begin{abstract}
We introduce several classes of quantum combinatorial designs, namely quantum Latin squares, 
cubes, hypercubes and a notion of orthogonality between them. A further introduced notion,
 quantum orthogonal arrays, generalizes all previous classes of designs. 
We show that mutually orthogonal quantum Latin arrangements can
 be entangled in the same way than quantum states are entangled. 
Furthermore, we show that such designs naturally define a remarkable class of genuinely multipartite highly entangled states called $k$-uniform, i.e. multipartite pure states such that every reduction to $k$ parties is 
maximally mixed. We derive infinitely many classes of mutually orthogonal quantum Latin 
arrangements and quantum orthogonal arrays having an arbitrary large number of columns. 
The corresponding multipartite $k$-uniform states exhibit a high persistency of entanglement, which makes them ideal candidates to develop multipartite quantum information protocols.
\end{abstract}

\maketitle
Keywords: combinatorial designs, quantum orthogonal arrays, $k$-uniform states.

\section{Introduction}
One of the key problems in the theory of quantum information is
 to identify multipartite quantum states with the strongest
 possible quantum correlations.
Contrary to the classical behaviour, information stored in multipartite quantum systems is not equivalent to information provided by the parties. The extremal situation occurs when information stored in an $N$ partite pure quantum state is not present \emph{at all} in any subset of $k$ collaborating parties, for some integer $k\leq N/2$. 
Such pure states are called \emph{k-uniform}~\cite{S04,AC13,GZ14}, 
meaning that every reduction to $k$ parties is given by the maximally mixed state. When $k=\floor{N/2}$, where $k=\floor{ .}$ denotes the floor function,
the state is called \emph{Absolutely Maximally Entangled} (AME).
Sometimes, these states are also called Maximally Multi-partite Entangled States 
 \cite{FFPP08}, or MMES for short. 

For instance, the generalized Bell states of two subsystems with $d$ levels each
and the tripartite GHZ-like states
belong to the AME class. These highly entangled states 
 find applications in quantum secret sharing \cite{HCLRL12}, 
quantum error correction codes \cite{S04} and holographic codes \cite{PYHP15}.
 They can be constructed from graph states \cite{HEB04,H13}, orthogonal arrays \cite{GZ14},
 multi-unitary matrices \cite{GALRZ15} and perfect tensors \cite{PYHP15,S04}. 
Furthermore, from gluing AME states further multipartite classes of such states can be constructed in higher dimensions \cite{RK17}. However, to determine the existence of $\AME(N,d)$ for any number of parties $N$ and internal levels $d$ is a difficult problem, specially if $d$ is not a power of a prime number \cite{Hu17}.
 Many approaches were tried in order to give an answer to this question, including 
recasting of the problem in language of statistical mechanics
\cite{cumulants,Raissi-2017,cactus}.

In this work, we introduce certain classes of combinatorial designs by extending classical symbols to pure quantum states. Our starting point is the notion of \emph{quantum Latin squares} (QLS) \cite{MV15}, which we generalize to quantum Latin cubes (QLC) and hypercubes (QLH).
 We also introduce a notion of orthogonality between them and identify
 a crucial  ingredient missing in the previous approach \cite{M17}: two orthogonal QLS could be entangled, in such a way that they cannot be expressed as two separated arrangements. These entangled designs are intrinsically associated to a larger class of quantum designs that includes all previous quantum Latin arrangements: \emph{quantum orthogonal arrays}. After setting up the quantum combinatorial tools we apply our method to the problem to construct $k$-uniform states and absolutely maximally entangled states in particular, for multipartite systems having an arbitrary large number of parties.

The paper is organized as follows: In Section \ref{S2}, we recall the 
standard concepts of (classical) Latin squares, Latin cubes, Latin hypercubes and orthogonal arrays and review their basic properties. In Section \ref{S3} we define quantum Latin squares, cubes, hypercubes and introduce a notion of orthogonality between them. Simple examples in low dimensions are provided. In Section \ref{S4}, we introduce the concept of quantum orthogonal arrays. We show that quantum Latin arrangements arise from quantum orthogonal arrays in the same way that Latin arrangements arise from orthogonal arrays in combinatorics. 
In Section \ref{S5} we show a connection existing between quantum orthogonal arrays and multiunitary matrices, the last ones introduced in Ref. \cite{GALRZ15}. In Section \ref{S6} we derive simple constructions of $k$-uniform and AME states from quantum orthogonal arrays. A summary of results and concluding remarks are presented in Section \ref{S7}.

\section{Latin arrangements and orthogonal arrays}\label{S2}
In this section, we review some basic combinatorial concepts used in this work. 
A Latin square $\LS(d)$ is a square arrangement of size $d$ such that every entry, taken from the set $\{0,\dots,d-1\}$, occurs once in each row and each column. For instance, arrangements
\begin{equation}
\begin{array}{cc}
0&1\\
1&0\\
\end{array}, \hspace{0.6cm}
\begin{array}{ccc}
0&1&2\\
2&0&1\\
1&2&0
\end{array}, \hspace{0.6cm}
\begin{array}{cccc}
0&1&2&3\\
1&0&3&2\\
2&3&0&1\\
3&2&1&0
\end{array},
\end{equation}
are Latin squares of size $d$ equal to two, three and four, respectively.

An \emph{orthogonal array}, denoted as $\OA(r,N,d,k)$, is an arrangement composed by $r$ rows, $N$ columns and entries taken from the set $\{0,\dots,d-1\}$, such that every subset of $k$ columns contains all possible combinations of symbols, occurring the same number ($\lambda$) of times along the rows. Here, parameters $k$ and $\lambda$ are called \emph{strength} and \emph{index} of the OA, respectively \cite{HSS99}. An OA is called \emph{irredundant} if every subset of ($N-k$) columns contains no repeated rows \cite{GZ14}. Two OA are called equivalent if one array can be transformed into the other one by applying permutations or relabelling of symbols in rows or columns.

It is simple to show that any $\LS(d)$ is equivalent to an $\OA(d^2,3,d,2)$ 
-- see Chapter 8 in Ref. \cite{HSS99}.  For example, the array 
$\OA(4,3,2,2)$ produces a $\LS(2)$, as shown below:
 \begin{equation}\label{OA4322}
OA=\begin{array}{ccc}
0&0&0\\
0&1&1\\
1&0&1\\
1&1&0 \\
-- & -- & -- \\
i & j & LS
\end{array}\hspace{0.5cm}\Rightarrow\hspace{0.5cm}
LS=\begin{array}{cc}
0&1\\
1&0
\end{array}.
\end{equation}
Here, the first two columns of the OA identify coordinates  $(i,j)$
of symbols for the LS, whose values are determined by the third column $LS$
of the OA.

Two Latin squares $\LS^A$ and $\LS^B$ of size $d$ are \emph{orthogonal} if the set of ordered pairs $[(\LS^A)_{ij}, (\LS^B)_{ij}]$ is composed by all possible $d^2$ combinations symbols, where $i,j \in \{0,\dots,d-1 \}$. A collection of $m$ LS of order $d$ is called mutually orthogonal (MOLS) if they are pairwise orthogonal. For instance, any $\OA(d^2,2+m,d,2)$ defines a set of $m$ MOLS of size $d$ \cite{HSS99}. In particular, an $\OA(9,4,3,2)$ implies two classical OLS of size 3. As before, first two columns $(i,j)$ of the OA address entries of OLS,
 while the two latter yield the values of the squares A and B,
\begin{equation}\label{MOLS3}
  \begin{split}
OA(9,4,3,2)=\begin{array}{cccc}
0&0&0&0\\
0&1&2&1\\
1&0&2&2\\
1&1&1&0\\
1&2&0&1\\
2&1&0&2\\
2&2&2&0\\
2&0&1&1\\
0&2&1&2\\
- & - & - & -\\
i  & j  & A  & B  
\end{array}  \end{split}
\quad\Rightarrow\quad
  \begin{split}
\LS^A=
\begin{array}{ccc}
0&2&1\\
2&1&0\\
1&0&2
\end{array}\\[0.5cm]
\LS^B=
\begin{array}{ccc}
0&1&2\\
2&0&1\\
1&2&0
\end{array} .
  \end{split}
\end{equation}
Entries of two OLS are typically dnoted as ordered pairs in a single array. For instance, the two OLS of Eq.(\ref{MOLS3}) are denoted as 
\begin{equation}
OLS=\begin{array}{ccc}
00&21&12\\
22&10&01\\
11&02&20
\end{array}.
\end{equation}
Furthermore, orthogonal arrays can be associated to Latin cubes. An $\OA(d^3,4,d,3)$ defines a Latin cube $\LC(d)$, which consists on a cubic arrangement composed by $d$ rows, $d$ columns and $d$ files, such that every entry taken from the set $\{0,\dots,d-1\}$ occurs once in each row, each column and each file. For instance, $\OA(8,4,2,3)$ defines a LC of size 2, where now first three bits  $(i,j,k)$ determine the position of a given element of the cube $LC$, while the last bit determines its value,
\begin{equation}\label{OA8422}
OA(8,4,2,3)=\begin{array}{cccc}
0&0&0&0\\
0&0&1&1\\
0&1&0&1\\
0&1&1&0\\
1&0&0&1\\
1&0&1&0\\
1&1&0&0\\
1&1&1&1 \\
- & - & - & -\\
i  & j  & k  & LC 
\end{array} \, , \
LC=\!\begin{tikzpicture}
[
baseline={([yshift=-.5ex]current bounding box.center)},
back line/.style={dashed},
cross line/.style={preaction={draw=white, -,
line width=6pt}}]
\matrix (m) [matrix of math nodes,
row sep=0.8em, column sep=1em,
text height=1.5ex,
text depth=0.25ex]
{
& \mathbf{1} & & \mathbf{0} \\
\mathbf{0} & & \mathbf{1} \\
& \mathbf{0} & & \mathbf{1} \\
\mathbf{1} & & \mathbf{0} \\
};
\path[-]
(m-1-2) edge [back line] (m-1-4)
edge [back line] (m-2-1)
edge [back line] (m-3-2)
(m-1-4) edge [back line] (m-3-4)
edge [back line] (m-2-3)
(m-2-1) edge [back line] (m-2-3)
edge [back line] (m-4-1)
(m-3-2) edge [back line] (m-3-4)
edge [back line] (m-4-1)
(m-4-1) edge [back line] (m-4-3)
(m-3-4) edge [back line] (m-4-3)
(m-2-3) edge [back line] (m-4-3);
\end{tikzpicture}
\end{equation}
In general, an $\OA(d^k,k+m,d,k)$ defines $m$ mutually orthogonal Latin hypercube (LH) of size $d$ in dimension $k$, denoted MOLH($d$). Figure \ref{Fig1} summarizes existing relations between OA and Latin arrangements.

To emphasize the difference between the above described standard combinatorial designs and
their quantum generalizations discussed in subsequent sections we will
 refer to OA, LS and MOLS and MOLC as the \emph{classical} arrangements.
\begin{figure}[!h]
\centering 
{\includegraphics[width=6cm]{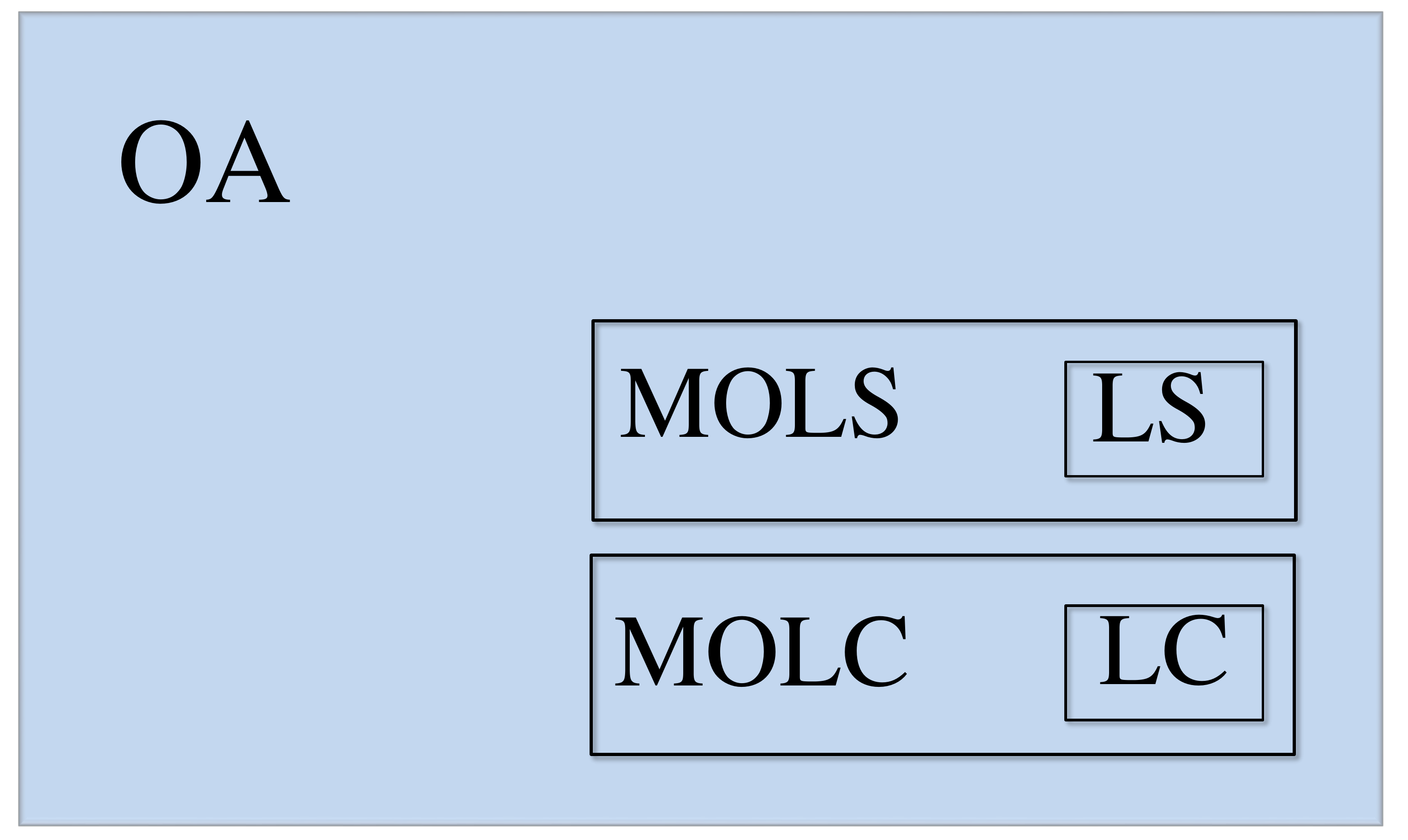}} 
\caption{Orthogonal arrays generalize some classes of combinatorial arrangements: Latin squares (LS), Latin cubes (LC), and mutually orthogonal LS and LC (MOLS and MOLC, respectively). These arrangements can be generalized to Latin hypercubes (LH) and mutually orthogonal LH (MOLS), respectively. 
Along this work, we develop a theory of quantum combinatorial designs and show that quantum Latin arrangements arise from QOA in the same way than classical Latin arrangements arise from OA.}
\label{Fig1}
\end{figure}
 An OA having $r$ rows, $N$ columns and $d$ symbols can be associated with a pure quantum state of $N$ qudit system having $r$ terms \cite{GZ14}. Each row of the array corresponds to a single term of the state, so left hand side of the arrangement  (\ref{MOLS3}) yields the unnormalized state of four parties
\begin{eqnarray}\label{AME43}
|\phi_{4,3}\rangle&=&|0000\rangle+|0121\rangle+|1022\rangle+\nonumber\\
&&|1110\rangle+|1201\rangle+|2102\rangle+\nonumber\\
&&|2220\rangle+|2011\rangle+|0212\rangle.
\end{eqnarray}
This state is maximally entangled with respect to the $\binom{4}{2}=6$ possible balanced bipartitions and it is called \emph{absolutely maximally entangled state}, denoted $\AME(4,3)$ \cite{Popescu}. Along the work we consider unnormalized pure states, for the sake of simplicity.
\section{Quantum Latin arrangements} \label{S3}
Recently, quantum Latin squares (QLS) \cite{MV15} and \emph{weakly orthogonal} QLS \cite{M17} have been introduced, where classical symbols appearing in entries of arrangements were extended to quantum states. These concepts were used to define unitary error bases \cite{MV15} and mutually unbiased bases \cite{M17}. In this section, we extend those results by introducing some classes of quantum Latin arrangements, where QLS are a particular case. \medskip

The following notion of quantum Latin squares
was introduced by Musto and Vicary \cite{MV15}. 
\begin{defi}
\label{defiLS}
A quantum Latin square of size $d$ 
is a square arrangement,
\begin{equation}
QLS(d)=
\begin{array}{ccc}
|\psi_{0,0}\rangle&\dots&|\psi_{0,d-1}\rangle\\
\vdots&&\vdots\\
|\psi_{d-1,0}\rangle&\dots&|\psi_{d-1,d-1}\rangle
\end{array}
\end{equation}
composed of $d^2$ single particle quantum states $|\psi_{ij}\rangle\in\mathcal{H}_d$, $i,j\in\{0,\dots,d-1\}$, such that each row and each column determines an orthonormal basis for a qudit system. 
\end{defi}
For instance, the following example of
a quantum Latin square was given in Ref. \cite{MV15},
\begin{equation}\label{LS-example}
\begin{array}{cccc}
|0\rangle&|1\rangle&|2\rangle&|3\rangle\\
|3\rangle&|2\rangle&|1\rangle&|0\rangle\\
|\chi_{-}\rangle&|\xi_{-}\rangle&|\xi_{+}\rangle&|\chi_{+}\rangle\\
|\chi_{+}\rangle&|\xi_{+}\rangle&|\xi_{-}\rangle&|\chi_{-}\rangle
\end{array},
\end{equation}
where two lower rows contain entangled states,
$|\chi_{\pm}\rangle=\frac{1}{\sqrt{2}}(|1\rangle\pm|0\rangle)$, $|\xi_+\rangle=\frac{1}{\sqrt{5}}(i\ket{0}+2\ket{3})$ and $|\xi_-\rangle=\frac{1}{\sqrt{5}}(2\ket{0}+i\ket{3})$.
As a first observation, we realize that any QLS is naturally related to a tripartite pure state having maximally mixed single particle reductions.
\begin{prop}\label{prop1}
A set of $d^2$ vectors $|\psi_{ij}\rangle\in\mathcal{H}_d$ forms a $\QLS(d)$ 
if and only if every single particle reduction of the three qudit state
\begin{equation}\label{PhiMOLS}
|\Phi\rangle=\sum_{i,j=0}^{d-1}|i\rangle|j\rangle|\psi_{ij}\rangle,
\end{equation}
is maximally mixed.
\end{prop}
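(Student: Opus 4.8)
The plan is to compute the three single-particle reduced density matrices of $\ket{\Phi}$ and to read the maximal-mixedness conditions directly off their entries. A short calculation gives $(\rho_1)_{ik}=\sum_{j}\braket{\psi_{kj}}{\psi_{ij}}$, $(\rho_2)_{jl}=\sum_{i}\braket{\psi_{il}}{\psi_{ij}}$ and $\rho_3=\sum_{ij}\ketbra{\psi_{ij}}{\psi_{ij}}$, where $\rho_1,\rho_2,\rho_3$ denote the reductions to the first, second and third qudit, respectively. For the \emph{if} direction I would argue as follows: if every column $\{\ket{\psi_{ij}}\}_i$ is orthonormal then $\braket{\psi_{kj}}{\psi_{ij}}=\delta_{ik}$ for each $j$, so $(\rho_1)_{ik}=d\,\delta_{ik}$; if every row $\{\ket{\psi_{ij}}\}_j$ is orthonormal then likewise $(\rho_2)_{jl}=d\,\delta_{jl}$; and grouping $\rho_3=\sum_{j}\left(\sum_{i}\ketbra{\psi_{ij}}{\psi_{ij}}\right)=\sum_{j}\1=d\,\1$ by completeness of each column. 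Hence a $\QLS(d)$ forces all three marginals to be maximally mixed, which already establishes the observation motivating the statement.

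For the converse I would recast the marginals through Gram matrices. Define the column Gram matrices $G^{(j)}$ with entries $(G^{(j)})_{ik}=\braket{\psi_{kj}}{\psi_{ij}}$ and the row Gram matrices $F^{(i)}$ with entries $(F^{(i)})_{jl}=\braket{\psi_{il}}{\psi_{ij}}$; each is positive semidefinite, and maximal mixedness of $\rho_1$ and $\rho_2$ reads $\sum_{j}G^{(j)}=d\,\1$ and $\sum_{i}F^{(i)}=d\,\1$. The target is to promote these to the pointwise identities $G^{(j)}=\1$ and $F^{(i)}=\1$ for every $j$ and every $i$, which are exactly column and row orthonormality, i.e. the $\QLS$ property. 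To this end I would exploit that $\ket{\Phi}$ is pure: the reduction $\rho_{12}$ to the first two qudits is the (conjugated) Gram matrix of the whole family $\{\ket{\psi_{ij}}\}$, and since $\rho_3$ is maximally mixed of rank $d$, the state $\rho_{12}$ is proportional to a rank-$d$ projector whose two marginals are both maximally mixed, by the Schmidt decomposition across the cut $3\,|\,12$. The plan is then to combine this projector and rank structure with the positivity of the $G^{(j)}$ and $F^{(i)}$ and with the fact that $\sum_j\braket{\psi_{ij}}{\psi_{ij}}$ is constant in $i$, so as to pin down each diagonal block.

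The main obstacle is exactly this converse. Maximal mixedness of the three single-particle marginals delivers only the \emph{summed} constraints $\sum_{j}G^{(j)}=d\,\1$, $\sum_{i}F^{(i)}=d\,\1$ and $\rho_3=d\,\1$, whereas the $\QLS$ property is the much finer collection of individual relations $\braket{\psi_{ij}}{\psi_{ij'}}=\delta_{jj'}$ and $\braket{\psi_{ij}}{\psi_{i'j}}=\delta_{ii'}$. Since a sum of positive-semidefinite matrices equal to a multiple of the identity need not have each summand equal to the identity, the delicate point is to show that the rank-$d$ projector structure inherited from purity, taken together with the positivity of every $G^{(j)}$ and $F^{(i)}$, leaves no freedom other than $G^{(j)}=F^{(i)}=\1$. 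I expect this upgrading step — and the accompanying check that no additional regularity hypothesis on the $\ket{\psi_{ij}}$ is tacitly required — to be where essentially all of the work of the converse lies.
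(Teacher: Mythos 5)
Your computation of the three marginals and your proof of the direction ``$\QLS(d)\Rightarrow$ all single-particle reductions maximally mixed'' is correct and coincides with the paper's argument (the paper computes $\rho_A$ and $\rho_C$ explicitly in exactly this way). For the converse you have correctly located the crux: maximal mixedness only delivers the \emph{summed} Gram constraints $\sum_j G^{(j)}=d\,\1$, $\sum_i F^{(i)}=d\,\1$ and $\sum_{ij}\ketbra{\psi_{ij}}{\psi_{ij}}=d\,\1$, while the $\QLS$ property demands each $G^{(j)}$ and $F^{(i)}$ individually equal to $\1$. Your plan to upgrade the summed constraints using positivity and the purity/rank structure cannot succeed, because the converse implication is in fact false. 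Take $d=2$ and the arrangement with rows $\{\ket{0},\ket{1}\}$ and $\{\ket{+},\ket{-}\}$, where $\ket{\pm}=(\ket{0}\pm\ket{1})/\sqrt{2}$; its columns $\{\ket{0},\ket{+}\}$ and $\{\ket{1},\ket{-}\}$ are not orthogonal, so this is not a $\QLS(2)$. Yet for $\ket{\Phi}=\ket{00}\ket{0}+\ket{01}\ket{1}+\ket{10}\ket{+}+\ket{11}\ket{-}$ one finds
\begin{equation*}
\rho_3=\ketbra{0}{0}+\ketbra{1}{1}+\ketbra{+}{+}+\ketbra{-}{-}=2\,\1,\qquad
(\rho_1)_{01}=\braket{+}{0}+\braket{-}{1}=0,\qquad
(\rho_2)_{01}=\braket{1}{0}+\braket{-}{+}=0,
\end{equation*}
so all three single-particle reductions are maximally mixed. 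A sum of positive semidefinite matrices equal to a multiple of the identity need not have identity summands, exactly as you feared, and this example realizes that failure.

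You should therefore not reproach yourself for leaving the converse open: the paper itself disposes of it with the single sentence ``The reciprocal implication works in the reverse way,'' which glosses over the gap you identified. The equivalence can be repaired by strengthening the right-hand side from the three \emph{global} marginal conditions to the row- and column-resolved conditions that each bipartite state $\sum_i\ket{i}\ket{\psi_{ij}}$ (fixed $j$) and $\sum_j\ket{j}\ket{\psi_{ij}}$ (fixed $i$) be maximally entangled, i.e.\ 1-uniform; this is precisely the structure the paper later builds into Definition~4 for orthogonal quantum Latin squares, and with that hypothesis your Gram-matrix reformulation closes the argument immediately, since it forces $G^{(j)}=F^{(i)}=\1$ for every $i,j$.
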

\begin{proof}
Let $|\psi_{ij}\rangle\in\mathcal{H}_d$ be the $d^2$ entries of a $\QLS(d)$ and let us define the state $|\Phi\rangle=\sum_{i,j=0}^{d-1}|i\rangle|j\rangle|\psi_{ij}\rangle$. Therefore 
\begin{eqnarray}\label{rho_A}
\hspace{-0.2cm}\rho_A\!&=&\mathrm{Tr}_{BC}|\Phi\rangle\langle\Phi|\nonumber\\
&=&\mathrm{Tr}_{BC}\!\left(\sum_{i,j,i',j'=0}^{d-1}\!|ij\rangle_{AB}\langle i'j'|\otimes|\psi_{ij}\rangle_{C}\langle\psi_{i'j'}|\right)\nonumber\\
&=&\hspace{-0.2cm}\sum_{i,j,i'=0}^{d-1}\!\langle\psi_{ij}|\psi_{i'j}\rangle_{BC}|i\rangle_{A}\langle i'|=\hspace{-0.2cm}\sum_{i,j,i'=0}^{d-1}\!|i\rangle_{A}\langle i|=\mathbb{I}_d,\nonumber
\end{eqnarray}
where we used the fact that $|\psi_{ij}\rangle\in\mathcal{H}_d$ defines a $\QLS(d)$ and denoted $A,B,C$ for first, second and third party, respectively. Analogously, $\rho_B=\mathbb{I}_d$,
as we work with non--normalized states.
 Furthermore, we have
\begin{eqnarray}
\rho_C&=&\mathrm{Tr}_{AB}\!\left(\sum_{i,j,i',j'=0}^{d-1}\!|ij\rangle\langle i'j'|\otimes|\psi_{ij}\rangle\langle\psi_{i'j'}|\right)\nonumber\\
&=&\sum_{i,j=0}^{d-1}|\psi_{ij}\rangle\langle\psi_{ij}|=\mathbb{I}_d,\nonumber
\end{eqnarray}
and, therefore,  state (\ref{PhiMOLS}) has every single particle reduction maximally mixed. The reciprocal implication 
 works in the reverse way.
\end{proof}
Let us exemplify Prop. \ref{prop1} by considering the 1-uniform state of
a three qudit system,
\begin{equation}\label{phiGHZ}
|\phi\rangle=F_d\otimes F_d\otimes\mathbb{I}_d|GHZ_d\rangle=\sum_{l,m=0}^{d-1}|lm\rangle|\psi_{l,m}\rangle .
\end{equation}
Here  $|GHZ_d\rangle=\sum_{n=0}^{d-1}|nnn\rangle$
denotes a generalized $GHZ$ state of three subsystems with $d$ levels each,
 $F_d=\sum_{l,m=0}^{d-1}\omega^{lm}|l\rangle\langle m|$ 
is the discrete Fourier transform of size $d$
containing an unimodular number $\omega=e^{2\pi i/d}$ and the state reads
\begin{equation}\label{QLS}
|\psi_{l,m}\rangle=\sum_{n=0}^{d-1}\omega^{n(l+m)}|n\rangle .
\end{equation}
This construction works for any $d\geq2$. 
The $d^2$ states from Eq.(\ref{QLS}) determine a $QLS$ of size $d$, which is equivalent to the classical $[LS(d)]_{lm}=l+m|\!\mod d$ with $l,m=0,\dots d-1$, as 
the classical arrangement can be obtained 
by applying the same suitable local unitary operation to every column of the QLS. 
The state (\ref{phiGHZ}) is 1-uniform and it
is equivalent to the three-qudit GHZ state, in agreement with Proposition \ref{prop1}. 
Let us generalize this fact in the following observation.
\begin{obs}
A QLS(d) is equivalent to a classical LS(d) if and only if one arrangement can be transformed into the other by applying the same local unitary operation to every column.
\end{obs}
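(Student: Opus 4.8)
The plan is to read ``equivalent'' in the sense inherited from the classical case recalled in Section~\ref{S2}: two arrangements are equivalent when one is carried into the other by permutations of rows, permutations of columns, and a relabelling of the symbols. The only genuinely quantum ingredient is the last one, and the first thing I would pin down is its correct form. A relabelling of the symbols acts on the \emph{values} stored in the cells, i.e. on the single--particle space $\mathcal{H}_d$ of the entries, and its quantum analogue is a unitary rotation of those states. I would stress that such a rotation must be applied \emph{identically to every cell} (equivalently, the same unitary $U$ on every column): applying column--dependent rotations $U_j$ keeps each column orthonormal automatically, but turns row $i$ into $\{U_j|\psi_{ij}\rangle\}_j$, whose orthonormality is no longer guaranteed, so column--dependent rotations are not symmetries of a $\QLS$. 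This is the conceptual crux and the step I would be most careful about, since it is what forces the relabelling to be a single global unitary.

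Granting this, both implications become short. For the $(\Leftarrow)$ direction, if a single unitary $U$ applied to every column sends the $\QLS(d)$ to a classical $\LS(d)$, then $U$ preserves the row-- and column--basis conditions and is invertible, hence it is an allowed symmetry and the two arrangements are equivalent; at the level of Proposition~\ref{prop1} this is simply the local rotation $\1_d\otimes\1_d\otimes U$ acting on the associated tripartite state, which stays $1$-uniform.

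For the $(\Rightarrow)$ direction I would use that the three elementary operations act on disjoint data: permutations rearrange the \emph{positions} $(i,j)$, whereas the unitary relabelling acts on the \emph{values}, so they commute. An arbitrary equivalence can therefore be reordered as a single global unitary $U$, the composition of all relabellings, followed by permutations of rows and columns. Writing the assumed classical target as $c_{ij}$, the reordering gives $U|\psi_{\rho(i,j)}\rangle=|c_{ij}\rangle$ for some rearrangement $\rho$ of positions, so that $U|\psi_{i'j'}\rangle=|c_{\rho^{-1}(i'j')}\rangle$. Since permuting rows and columns of a classical Latin square yields again a classical Latin square, the array $\{c_{\rho^{-1}(i'j')}\}$ is still a classical $\LS(d)$, and hence $U$ applied to every column already produces a classical $\LS(d)$, which is exactly the asserted form.

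The main obstacle is the preliminary remark rather than the bookkeeping: one must justify that the symbol--relabelling symmetry of a $\QLS$ is a single global unitary and not a family of column--dependent ones. I would settle this by the structural observation already made, namely that column--dependent rotations destroy the row orthonormality in general and so cannot be symmetries of the class, so the quantum counterpart of a classical symbol relabelling is forced to be one and the same unitary on every column. Once this is granted, the commutation of position permutations with the value--space unitary makes any equivalence collapse to ``one unitary applied to every column,'' closing both directions.
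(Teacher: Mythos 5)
Your argument is sound and turns on the same crux as the paper---that the quantum analogue of a symbol relabelling must be one and the same unitary applied to every column---but you reach it by a genuinely different route. The paper gives no formal proof of this observation: it supports the ``if'' direction by the worked example immediately preceding it (the $\QLS$ obtained from $F_d\otimes F_d\otimes\1\,|GHZ_d\rangle$, carried onto the classical square $l+m \bmod d$ by one unitary per column), and the ``only if'' direction by passing through Proposition~\ref{prop1}: a unitary applied to a \emph{single} column acts as a controlled-$U$ on the third party of the associated tripartite $1$-uniform state, which changes its entanglement and thereby spoils the Latin arrangement, whereas the same unitary on every column is the local operation $\1\otimes\1\otimes U$, which preserves both the $1$-uniformity and the $\QLS$ structure. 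Your version is purely combinatorial: you fix the admissible equivalence operations (row and column permutations plus relabelling), argue that the relabelling must be a single global unitary because column-dependent rotations generically break row orthonormality, and then commute the position permutations past the value-space unitary. What your route buys is self-containedness and an explicit reduction of an arbitrary equivalence to the asserted normal form; what the paper's route buys is the link to the entanglement of the associated state, which is the organizing idea of the section. One caveat applies to both arguments equally: your exclusion of column-dependent unitaries (like the paper's controlled-$U$ remark) shows they are not symmetries of the \emph{class} of $\QLS$, not that no column-dependent family could ever carry one particular $\QLS$ onto one particular classical $\LS$; in effect both arguments fix the definition of equivalence rather than derive it, which is presumably why the paper states this as an observation rather than a proposition.
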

Furthermore, note that a unitary operation $U$ applied to a single column of a LS implies a controlled $U$ operation acting on the third party of the corresponding three-partite 1-uniform state (see Prop. \ref{prop1}). As consequence, the entanglement of the state is changed
and the Latin arrangement is spoiled by a single column unitary operation. \medskip

The notion of \emph{weekly orthogonal} QLS has been recently introduced \cite{M17}.
\begin{defi}\label{Mustodef}
A pair of QLS of size $d$ having entries $\{\varphi_{ij}\}$ and $\{\varphi^{\prime}_{ij}\}$ are weakly orthogonal when for every $i,j\in\{0,\dots,d-1\}$, there exists a unique $t_{ij}\in\{0,\dots,d-1\}$ such that
\begin{equation}\label{Mustoeq}
\sum_{l=0}^{d-1}\langle\varphi_{li}|\varphi^{\prime}_{lj}\rangle|l\rangle=|t_{ij}\rangle.
\end{equation}
\end{defi}
This definition reflects some desired aspects in orthogonal QLS. Indeed, it is reduced to standard definition of LS if the states $\varphi_{ij}$ belong to the computational basis. However, other fundamental ingredients seem to be missing here. For instance, the astonishing property that a pair of orthogonal QLS is \emph{not} necessarily equivalent to two QLS satisfying an orthogonality criteria, as we will see below. Those sets of orthogonal QLS that cannot be separated will be called \emph{essentially quantum Latin squares}. This new concept of non-separability of combinatorial designs is analogous to the non-separability of quantum states.

Let us now introduce the notion of orthogonality for QLS, which is \emph{not} 
equivalent to orthogonality for two separated quantum arrangements.
\begin{defi}\label{OLSdef}
A set of $d^2$ pure quantum states $|\psi_{i,j}\rangle\in \mathcal{H}_d^{\otimes 2}$ arranged as
\begin{equation}\label{MOQLS}
\begin{array}{ccc}
|\psi_{0,0}\rangle&\dots&|\psi_{0,d-1}\rangle\\
\vdots&&\vdots\\
|\psi_{d-1,0}\rangle&\dots&|\psi_{d-1,d-1}\rangle
\end{array}
\end{equation}
forms a pair of orthogonal quantum Latin squares (OQLS) if the following properties hold:
\begin{enumerate} 
\item The set of $d^2$ states $\{|\psi_{i,j}\rangle\}$ 
are orthogonal
and form a basis in $\mathcal{H}_d \otimes \mathcal{H}_d$.
\item The sum of every row in the array (\ref{MOQLS}), i.e.~$\sum_{j=0}^{d-1}|\psi_{i,j}\rangle$, is a 1-uniform state. 
\item The sum of every column in the array (\ref{MOQLS}), i.e.~$\sum_{i=0}^{d-1}|\psi_{i,j}\rangle$, is a 1-uniform state.
\end{enumerate}
\end{defi} 

\begin{obs}
Two OQLS composed of separable states, $|\psi^{AB}_{i,j}\rangle\!=\!|\eta^A_{ij}\rangle\otimes|\eta^B_{ij}\rangle$ for every $i,j\in\{0,\dots,d-1\}$, imply than both arrangements $\{|\eta^A_{ij}\rangle\}$ and $\{|\eta^B_{ij}\rangle\}$ determine QLS, according to Definition \ref{defiLS}. 
\end{obs}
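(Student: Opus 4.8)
The plan is to prove the statement one line of the array at a time, exploiting the symmetry between rows and columns. Fix a row index $i$ and abbreviate $|a_j\rangle=|\eta^A_{ij}\rangle$ and $|b_j\rangle=|\eta^B_{ij}\rangle$, so that the separable entries read $|\psi_{i,j}\rangle=|a_j\rangle\otimes|b_j\rangle$. By property~1 of Definition~\ref{OLSdef} these $d$ product vectors are orthonormal, hence in particular normalized; since the factorization of a product state carries a scalar ambiguity I may rescale so that $\langle a_j|a_j\rangle=\langle b_j|b_j\rangle=1$. The goal for this row is to show that $\{|a_j\rangle\}_{j}$ and $\{|b_j\rangle\}_{j}$ are \emph{each} orthonormal bases of $\mathcal{H}_d$; running the identical argument across columns, with property~3 in place of property~2, then yields orthonormality along every column as well, so that both $\{|\eta^A_{ij}\rangle\}$ and $\{|\eta^B_{ij}\rangle\}$ satisfy Definition~\ref{defiLS}.

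The engine of the proof is the $1$-uniformity of the row sum. First I would record the Gram matrices $G^A_{jj'}=\langle a_j|a_{j'}\rangle$ and $G^B_{jj'}=\langle b_j|b_{j'}\rangle$, both Hermitian, positive semidefinite and with unit diagonal. Writing $M^A,M^B$ for the matrices whose columns are the $|a_j\rangle$ and $|b_j\rangle$, the coefficient matrix of the row sum $|R_i\rangle=\sum_j|a_j\rangle\otimes|b_j\rangle$ is $C=M^A (M^B)^{T}$, and its first reduction is $\rho_A=\mathrm{Tr}_B|R_i\rangle\langle R_i|=CC^{\dagger}$. Orthonormality of the entries gives $\langle R_i|R_i\rangle=d$, and property~2 forces $\rho_A=\mathbb{I}_d$; hence $C$ is unitary and $M^A,M^B$ are both invertible. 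Expanding $CC^{\dagger}=M^A\,\overline{G^B}\,M^{A\dagger}=\mathbb{I}_d$ then yields the key identity $\overline{G^B}=(G^A)^{-1}$.

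The decisive step is a trace/eigenvalue argument. Taking traces in $\overline{G^B}=(G^A)^{-1}$ and using the unit diagonals gives $\mathrm{Tr}(G^A)=d$ and $\mathrm{Tr}\big((G^A)^{-1}\big)=\mathrm{Tr}(G^B)=d$. Since $G^A$ is positive definite with eigenvalues $\lambda_1,\dots,\lambda_d>0$, Cauchy--Schwarz gives $\big(\sum_k\lambda_k\big)\big(\sum_k\lambda_k^{-1}\big)\ge d^2$, with equality \emph{iff} all $\lambda_k$ coincide. Here the left-hand side equals $d\cdot d=d^2$, so equality holds and $\lambda_1=\dots=\lambda_d=1$; therefore $G^A=\mathbb{I}_d$ and, by the key identity, $G^B=\mathbb{I}_d$. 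This is exactly orthonormality of $\{|a_j\rangle\}_j$ and of $\{|b_j\rangle\}_j$, as desired.

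I expect the main obstacle to be conceptual rather than computational: property~1 alone only tells us that for each off-diagonal pair \emph{at least one} of $\langle a_j|a_{j'}\rangle$ or $\langle b_j|b_{j'}\rangle$ vanishes, which is strictly weaker than the desired conclusion that \emph{both} families are orthonormal. The point is that this gap can only be closed by invoking the $1$-uniform conditions, and the cleanest bridge is the Gram-matrix identity $\overline{G^B}=(G^A)^{-1}$ together with the sharp Cauchy--Schwarz equality above; isolating this equality case is where the argument really bites. A subtlety to handle carefully is the non-normalization convention: one must verify $\langle R_i|R_i\rangle=d$ so that $1$-uniformity pins $\rho_A$ to $\mathbb{I}_d$ exactly, and not merely up to a positive scalar, since it is precisely this exactness that saturates the inequality.
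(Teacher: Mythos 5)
Your proof is correct, and it starts from the same place as the paper's justification --- properties 2 and 3 of Definition \ref{OLSdef}, i.e.\ the $1$-uniformity of the row and column sums, which forces the single-party reductions of those sums to be the identity. The difference is in what happens next. The paper's argument is a single sentence that passes directly from ``the reductions to $A$ and $B$ are maximally mixed'' to ``every row and column of $\{|\eta^A_{ij}\rangle\}$ and $\{|\eta^B_{ij}\rangle\}$ is an orthonormal basis,'' and that step is not immediate: as you correctly point out, $\rho_A=\mathbb{I}_d$ only says the coefficient matrix $C=M^A(M^B)^T$ is unitary, which constrains the two Gram matrices jointly via $\overline{G^B}=(G^A)^{-1}$ but does not by itself make either one the identity, and property 1 only gives that for each off-diagonal pair at least one of the two overlaps vanishes. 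Your trace/Cauchy--Schwarz equality argument ($\mathrm{Tr}(G^A)=\mathrm{Tr}((G^A)^{-1})=d$ forcing all eigenvalues of $G^A$ to equal $1$) is exactly the bridge that closes this gap, and it is where the normalization of the entries genuinely enters; the paper supplies none of this. So your write-up is not a different route so much as a completed version of the paper's sketch: what it buys is an actual proof of the implication the paper asserts, at the cost of the Gram-matrix bookkeeping. The one point worth flagging is that your rescaling $\langle a_j|a_j\rangle=\langle b_j|b_j\rangle=1$ presupposes that the $d^2$ states in property 1 are normalized, not merely orthogonal; this is the natural reading (the conclusion, a QLS in the sense of Definition \ref{defiLS}, requires orthonormal rows and columns), but since the paper elsewhere works with unnormalized states it is good that you made the assumption explicit --- without it the saturation of the inequality is not guaranteed.
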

Indeed, single party reductions to $A$ and $B$ of the states defined in
 items 2 and 3 above are proportional to the maximally mixed state, so that every 
row and every column of arrangements $\{|\eta^A_{ij}\rangle\}$ and $\{|\eta^B_{ij}\rangle\}$ form an orthonormal basis. Moreover, if entries of each QLS are given by elements of the computational basis then Definitions \ref{defiLS} and \ref{OLSdef} reduces to the classical definition of LS and OLS, respectively (see Section \ref{S2}).\medskip

As we will show in Section \ref{S4}, OQLS are closely related to 2-uniform states. In order to achieve higher classes of multipartite entanglement, i.e. $k$-uniformity for $k>2$, 
one has to generalize  quantum combinatorial arrangements to higher dimensions.
To this end, let us go a step forward and introduce quantum Latin cubes.
\begin{defi}
A quantum Latin cube (QLC) of size $d$ is a
 cubic arrangement composed of $d^3$ single particle quantum pure 
states $|\psi_{ijk}\rangle\in\mathcal{H}_d$, $i,j,k\in\{0,\dots,d-1\}$,
 such that every row, every column and every file form a set of orthogonal states. 
\end{defi}
For instance, in the case of a cubic arrangement composed by qubit quantum states, i.e. $d=2$, we have the cube (\ref{OA8422}).
Let us introduce a notion of orthogonality between cubic arrangements. 

\begin{defi}\label{OLCdef}
A set of $d^3$ tri-partite pure states $|\psi_{x,y,z}\rangle$ belonging 
to a composed Hilbert space $\mathcal{H}_d^3$, arranged as
\tikzset{
  vertex/.style={
    circle, minimum size=20pt, inner sep=0pt, fill=white},
  axial/.style={
    rectangle, minimum size=20pt, inner sep=0pt, fill=gray!30},
  edge/.style={draw,thick,-,black},
  rotu/.style={blue, midway, densely dotted},
  sinal/.style={draw, circle, inner sep=0pt, thin},
}

\def\dist{0.4}
\begin{tikzpicture}[
  scale=4, -, z={(0.55,0.45)}, node distance=0.65cm]
  \node[vertex] (v0) at (0,0,0) {$\ket{\psi_{0,d-1,0}}$};
  \node[vertex] (v1) at (0,1,0) {$\ket{\psi_{0,0,0}}$};
  \node[vertex] (v2) at (1,0,0) {$\ket{\psi_{0,d-1,d-1}}$};
  \node[vertex] (v3) at (1,1,0) {$\ket{\psi_{0,0,d-1}}$};
  \node[vertex] (v4) at (0,0,1) {$\ket{\psi_{d-1,d-1,0}}$};
  \node[vertex] (v5) at (0,1,1) {$\ket{\psi_{d-1,0,0}}$};
  \node[vertex] (v6) at (1,0,1) {$\ket{\psi_{d-1,d-1,d-1}}$};
  \node[vertex] (v7) at (1,1,1) {$\ket{\psi_{d-1,0,d-1}}$};
  \draw[edge] (v0) -- (v1) node[pos=0.5, anchor=center, fill=white] {$\vdots$} 
  -- (v3) node[pos=0.3,fill=white,rectangle, anchor=center] {$\ldots$} 
  -- (v2) node[pos=0.4,fill=white, rectangle, anchor=center] {$\vdots$}
  -- (v0) node[pos=0.5,fill=white] {$\ldots$};
  \draw[edge] (v0) -- (v4) node[pos=0.5,fill=white] {$\udots$}
  -- (v5) node[pos=0.35,fill=white] {$\vdots$}
  -- (v1) node[pos=0.5,fill=white] {$\udots$};
  \draw[edge] (v2) -- (v6) node[pos=0.5,fill=white] {$\udots$}
  -- (v7) node[pos=0.5,fill=white] {$\vdots$}
  -- (v3) node[pos=0.5,fill=white] {$\udots$};
  \draw[edge] (v4) -- (v6) node[pos=0.65,fill=white] {$\ldots$};
    \draw[edge] (v5) -- (v7) node[pos=0.5,fill=white] {$\ldots$};
\end{tikzpicture}\label{OQLC}
forms a triple of mutually orthogonal quantum Latin cubes (MOQLC)
 if the following properties hold:
\begin{enumerate} 
\item  The set of $d^3$ states $\{|\psi_{x,y,z}\rangle\}$ are orthogonal.
\item The sum of every row in the array (\ref{OQLC}), i.e.~$\sum_{i=0}^{d-1}|\psi_{x,y,z}\rangle$, is a 1-uniform state. 
\item The sum of every column in the array (\ref{OQLC}), i.e.~$\sum_{j=0}^{d-1}|\psi_{x,y,z}\rangle$, is a 1-uniform state. 
\item The sum of every file in the array (\ref{OQLC}), i.e.~$\sum_{k=0}^{d-1}|\psi_{x,y,z}\rangle$, is a 1-uniform state. 
\end{enumerate}
\end{defi}
Analogously to Definition \ref{OLSdef}, if the $d^3$ states forming
 a set of MOQLC are fully separable, i.e.
 $|\psi^{ABC}_{x,y,z}\rangle=|\eta^A_{x,y,z}\rangle\otimes|
\eta^{B}_{x,y,z}\rangle\otimes|\eta^{C}_{x,y,z}\rangle$,
 then each set of states $\{\eta^A_{x,y,z}\rangle\}$, 
$\{\eta^{B}_{x,y,z}\rangle\}$ and $\{|\eta^{C}_{x,y,z}\rangle\}$
 forms a QLS according to Definition \ref{defiLS}. Furthermore, in such a case
 a fully separable MOQLC is equivalent to a classical MOLC, in the sense that one can be connected to the other by applying local 
unitary operations acting in columns of the arrangements. This is so because any 
single-party orthonormal basis can be transformed into the computational basis 
by applying a suitable local unitary transformation. Also, if the states forming 
the cube (\ref{OQLC}) are biseparable with respect to a given partition, 
e.g. $|\psi^{ABC}_{x,y,z}\rangle\!=\!|\eta^A_{x,y,z}\rangle\otimes|\eta^{BC}_{x,y,z}\rangle$
 for every $x,y,z\in\{0,\dots,d-1\}$, then the single-party arrangement 
$\{|\eta^A_{x,y,z}\rangle\}$ defines a QLC according to Definition \ref{OLCdef}. 
It is important here to note that the bipartite arrangement 
$\{|\eta^{BC}_{x,y,z}\rangle\}$ \emph{not necessarily} forms a pair of OQLC.
 This surprising fact is closely related to the lack of some classes of 
multipartite absolutely maximal entanglement, e.g. AME($N$,2) 
states exist only if the number of qubits is given by $N=2,3,5,6$  \cite{S04,HGS17,HESG17}.
\medskip

As the concepts of OQLS and OQLC are settled, let us define an arbitrary 
dimensional kind of quantum combinatorial arrangements, 
called \emph{quantum Latin hypercubes}. These quantum arrangements can be
 connected to $k$-uniform states for $N$ qudit systems having $d$ levels 
each for any $k$, $N$ and $d$, as we will show in Section \ref{S4}.
\begin{defi}
\label{defiLH}
A quantum Latin hypercube of size $d$ and dimension $k$, denoted $QLH(d,k)$, 
is an arrangement composed of $d^k$ single particle quantum states 
$|\psi_{i_1,\dots,i_k}\rangle\in\mathcal{H}^{\otimes k}_d$, 
$i_1,\dots, i_k\in\{0,\dots,d-1\}$, 
 such that all states belonging to an edge of the hypercube are orthogonal. 
\end{defi}
In particular, 
for $k=2$ quantum hypercube $QLH(d,2)$ reduces to the square $QLS(d)$, 
while for $k=3$ they form a cube, $QLH(d,3)=QLC(d)$.
For instance, the state AME(8,7) with minimal support
determines an $m=4$ hypercubes MOQLH of size $d=7$ in dimension $k=4$,
equivalent to 4 classical MOLH. 
 These four separable hypercubes can be easily found from the 
 orthogonal array OA($7^4,8,7,4$)  with index equal to unity,
 associated to the AME(8,7) state -- see Theorem 3 and Prop. 2 (i) in \cite{GZ14}.
 This state is also related to a maximum distance separable (MDS) code
 \cite{Helwig-Cui2013,Raissi-2017}. 
Furthermore, the AME(8,5) state, which has non-minimal support, defines  $m=4$ -- essentially quantum -- orthogonal Latin hypercubes in dimension $k=4$, 
 with entangled entries. 
This state can be constructed from \emph{ququint codes} \cite{GR15}.\bigskip 

We can extend the sets of OQLS and OQLC to sets of $m$ mutually orthogonal quantum 
Latin hypercubes (MOQLH) of size $d$ and dimension  $k\le m$. The following definition contains all previously defined combinatorial designs.
\begin{defi}\label{defi6}
A set of $m$ mutually orthogonal quantum Latin hypercubes of size 
$d$ in dimension $k$, denoted $m$ $MOQLH(d)$, is a $k$-dimensional arrangement
 composed of $m$-qudit states 
$|\psi_{i_1,\dots,i_k}\rangle\in\mathcal{H}^{\otimes m}_d$,
 $i_1,\dots,i_m\in\{0,\dots,d-1\}$ such that the following properties hold:
\begin{enumerate} 
\item The set of $d^k$ states $\{|\psi_{i_1,\dots,i_k}\rangle\}$ are orthogonal.
\item The sum of $d$ states belonging to the same edge of the hypercube, 
i.e.~$\sum_{i_s=0}^{d-1}|\psi_{i_1,\dots,i_s,\dots,i_m}\rangle$ for every $1\leq s\leq m$,
forms a 1-uniform state. 
\end{enumerate}
\end{defi}
In particular, a set of $m$ MOLS are also MOQLS, e.g. the classical arrangements (\ref{MOLS3}) agree Definition \ref{defi6}. In Section \ref{S4}, we introduce a suitable tool to generate quantum Latin arrangements, called \emph{quantum orthogonal arrays}, and also establish its connection with quantum Latin arrangements.
\medskip

\subsection{Bounds for MOQLH}
Let us now study upper bounds for the maximal number of classical and quantum Latin arrangements. The theory of orthogonal arrays provides a bound \cite{B52}
for the maximal number of columns of an OA($d^k,2+m_C,d,k$), that has index unity. 
Therefore, it is easy to derive an upper bound for the maximal
 allowed number $m_C$ of classical MOLH of size $d$ and dimension $k$:
\begin{equation}\label{Bush}
m_C\leq \left\{
\begin{array}{c l}
k-1 & \mbox{if } d\leq k\\
d+k-4 & \mbox{if } d> k\geq3\\
d+k-3 & \mbox{in all other cases.}
\end{array}
\right.
\end{equation}
For example, in dimension $k=2$ we have that $m$ MOLS of size $d$ can only exist for $m_C\leq d-1$, for any $d\geq2$. The upper bound $m=d-1$ can be saturated for $d$ being a prime power number. These results, well-known in standard combinatorics, 
motivate us to derive similar results for quantum Latin arrangements. However,
 derivation of such a generalized bound requires solving a complicated optimization problem
 formalized by Scott -- see Eqs.(39)--(41) in Ref. \cite{S04}. 
Given the set of parameters $N,d,k$ ($n,D,d$ in the original notation) these equations can be solved by considering linear programming techniques. The particular case $k=\lfloor N/2\rfloor$, 
for which the arrangements are associated to AME states, 
can be analytically solved. Therefore, we are able to provide an analytic bound for the maximal number $m_Q$ of MOQLH in the case of maximal possible dimension $k=\lfloor N/2\rfloor$ as follows: 
\begin{equation}\label{Scott}
m_Q\leq \left\{
\begin{array}{c l}
2(d^2-1) & \mbox{ if } N \mbox{ is even}\\
2d(d+1)-1 & \mbox{ if } N \mbox{ is odd.}
\end{array}
\right.
\end{equation}
For instance, for $N=4$ and $k=2$ we have that $m_Q~\leq~2(d^2-1)$ MOQLS exist for any size $d$, which is $2(d+1)$ times larger than the classical bound $m_C\leq d-1$. It is important to note that bounds  (\ref{Scott}) are not tight, as the bounds provided by Scott
 \cite{S04} are not tight -- see also \cite{HESG17}.\medskip

Inequalities (\ref{Bush}) and (\ref{Scott}) can be useful to detect genuine quantumness in MOQLH. In general, given a set of $m$ MOQLH it is hard to detect inequivalence to a classical set of MOLS. Typically, such kind of comparison would require to consider a full set of entanglement invariants. However, for those cases where $m>m_C$ it is ensured that a MOQLH is essentially quantum. For instance, a single LS of size two exists and there are no two QOLS of size two. Surprisingly, there exists \emph{three} entangled MOQLS of size two exist, as we will show in Section \ref{S4}.

\section{Quantum orthogonal arrays}\label{S4}
In this section, we introduce quantum orthogonal arrays. This concept allows us to derive a simple rule to generate infinitely many classes of $k$-uniform states and absolutely maximally entangled states, in particular.
\begin{defi}\label{QOAdef}
A quantum orthogonal array $\QOA(r,N,d,k)$ is an arrangement consisting of $r$ rows composed by $N$-partite normalized pure quantum states $|\varphi_j\rangle\in\mathcal{H}_d^{\otimes N}$, having $d$ internal levels each, such that 
\begin{equation}
k\sum_{j=0}^{r-1}\mathrm{Tr}_{i_1,\dots,i_{N-k}}\bigl(|\varphi_{j}\rangle\langle\varphi_{j}|\bigr)=r\,\mathbb{I}_{k},
\end{equation}
for every subset of $N-k$ parties $\{i_1,\dots,i_{N-k}\}$.
\end{defi}
In words, a QOA is an arrangement having $N$ columns, possibly entangled, such that every reduction to $k$ columns defines a \emph{Positive Operator Valued Measure} (POVM). We recall that a POVM is a set of positive semidefinite operators such that they sum up to identity, determining a generalized quantum measurement \cite{NC00}.

We can also provide a connection to error correction codes that suggest us to consider generalized measurements instead of projective measurements in QOA. 
Note that any AME state (or $k$-uniform state) 
can be related to a certain quantum error correction code \cite{S04}.
In particular, an AME state of $N$ parties with local dimension $d$, 
corresponds to a quantum code -- which can be considered as an injective mapping from 
the space of $K=1$ messages to a subset $C$ of the set of
codewords with length $N$ -- often denoted by $(\!(N,K=1,D=\floor{N/2}+1)\!)_d$.
In this notation, the parameter $D$ is the distance of the code, i.e. the minimal number of
local operations performed on a single qudit that 
are needed to create a non-zero overlap between two codewords \cite{Gottesman-thesis}.
Knill-Laflamme theorem \cite{KL97} implies that
a subspace $C$ of the Hilbert space $\hiH = \mathbb{C}_d^{\otimes N}$ generates an error correcting quantum code,
 if there exist recovery operators $R_1, R_2, \ldots$ such that for 
any state $\rho$ with support in $C$ and any collection of error operators $A_1,A_2,\ldots$  with $\sum_e E_e^{\dagger}E_e=\1$, we have $\sum_{r,e}R_r E_e \rho E_e^{\dagger}R_r ^{\dagger}=\rho \otimes \1$.
In this case $R_1,R_2, \ldots$ are a finite sequence of operators in $\hiH$ satisfying the relation $\sum_r R_r^{\dagger}R_r=\1$.
This theorem combined with the fact that an AME state yields an error correction code 
allows us to define quantum orthogonal arrays 
in a way that every reduction produces a POVM.

\medskip

Definition \ref{QOAdef} forms a natural extension of the classical concept of orthogonal arrays
to quantum theory: { the classical digits from $(0,\dots,d-1)$ are 
generalized to quantum states from ${\cal H}_d$, while
the classical concept of subsets of columns are replaced by partial trace}.

From now on, we assume that columns of quantum arrangements are connected by the Kronecker product. Also, QOA having the minimal possible number of rows, i.e.  $r=d^k$, are called \emph{index unity}, as occurs in the classical case.

Let us introduce equivalent classes of QOA as a natural generalization of its classical counterpart, defined in Section \ref{S2}. Two QOA are \emph{equivalent} if one can 
transform one arrangement into the other one by applying suitable local unitary operations
to columns and permutation of rows or columns. 
Note that permutation of columns in quantum states produce states inequivalent under LOCC, in general. Nevertheless, as interchange of particles does not change the amount of entanglement in quantum states, from now on we will restrict our attention to QOA inequivalent under swap operations. Note that the only allowed local unitary operations in classical OA are permutation matrices, equivalent to relabelling of symbols. In contrast to quantum Latin arrangements, 
in QOA we are allowed to apply any local unitary operation to any column without spoiling the orthogonal array. 
To illustrate these ideas let us consider the following example:
\begin{equation*}\label{LS}
(\mathbb{I}\otimes\sigma_x)\,\begin{array}{cc}
|0\rangle&|0\rangle\\
|1\rangle&|1\rangle\\
\end{array}=\begin{array}{cc}
|0\rangle&|1\rangle\\
|1\rangle&|0\rangle\\
\end{array},
\end{equation*}
where $\sigma_x=\{\{0,1\},\{1,0\}\}$ is the Pauli shift operator. In this way, we obtain two equivalent classical OA. Instead, by applying the Hadamard gate $H=\{\{1,1\},\{1,-1\}\}$ to the second column, i.e.,
\begin{equation}\label{3MOQLS2}
(\mathbb{I}\otimes H)\,\begin{array}{cc}
|0\rangle&|0\rangle\\
|1\rangle&|1\rangle\\
\end{array}=\begin{array}{cc}
|0\rangle&|+\rangle\\
|1\rangle&|-\rangle\\
\end{array},
\end{equation}
with $|\pm \rangle  =|0 \rangle \pm |1 \rangle $,
we obtain a QOA which is equivalent under local unitary operations to a classical OA. The simplest essentially quantum orthogonal array consists of five columns, 
\begin{equation}\label{QOA4522}
QOA(4,5,2,2) =
\begin{array}{cccc}
|0\rangle&|0\rangle&|0\rangle&|\Phi^+\rangle\\
|0\rangle&|1\rangle&|1\rangle&|\Psi^+\rangle\\
|1\rangle&|0\rangle&|1\rangle&|\Psi^-\rangle\\
|1\rangle&|1\rangle&|0\rangle&|\Phi^-\rangle\\
\end{array},
\end{equation}
where, $|\Phi^{\pm}\rangle=(|00\rangle\pm|11\rangle)/\sqrt{2}$ and $|\Psi^{\pm}\rangle=(|01\rangle\pm|10\rangle)/\sqrt{2}$ denote the Bell basis.
To emphasize that some of these columns are separable (classical) 
and some of them are entangled (quantum),
we shall also write  $QOA(4,3_C+2_Q,2,2)$, as the second argument
denotes three classical and two quantum columns. Note that the number of classical and quantum columns, i.e. $N_C$ and $N_Q$ such that $N=N_C+N_Q$, are invariant under local unitary operations acting on columns of the QOA. Moreover, a QOA is equivalent to a classical OA if and only if $N_Q=0$, thus also implying a classical set of MOLS and a classical error correction code \cite{HSS99}. Roughly speaking, parameter $N_Q$ quantifies how much quantum is a given QOA and its related MOQLS and error correction code. 
As a further comment, note that every reduction to two columns of the arrangement (\ref{QOA4522}) form a POVM, 
where partial trace should be considered for entangled columns. The fact that QOA (\ref{QOA4522}) is not equivalent to a classical OA is in correspondence with the fact that AME(5,2) state cannot be written as a convex combination of elements of the 5-qubit computational basis. 

As we have seen in Section \ref{S2}, OLS arise from OA. First two columns of the OA provide address to entries of the first and second LS, whose values are determined by the third and fourth column of the OA, see Eq. (\ref{MOLS3}). In the same way, from $\QOA(4,5,2,2)$ of Eq.(\ref{QOA4522}) we derive three MOQLS of size 2, which are essentially quantum. A triple of mutually orthogonal 
quantum Latin squares reads, 
\begin{equation}\label{3MOQLS2}
MOQLS(2) =
\begin{array}{cc}
|0\rangle|\Phi^+\rangle&|1\rangle|\Psi^+\rangle\\
|1\rangle|\Psi^-\rangle&|0\rangle|\Phi^-\rangle
\end{array}.
\end{equation}
First two columns of QOA (\ref{QOA4522}) address entries of the three MOQLS (\ref{3MOQLS2}). Note that these three MOQLS are entangled, which is a direct consequence of the fact that QOA (\ref{QOA4522}) is not equivalent to a classical one. Indeed, QOA (\ref{QOA4522}) contains entangled columns. According to the results shown in Section \ref{S3}, a single party arrangement belonging to a set of MOQLS determines a QLS, what can be seen from Eq.(\ref{3MOQLS2}) after tracing out second and third party. However, the bipartite arrangement obtained from taking partial trace over the first subsystem of the QOA (\ref{3MOQLS2}), i.e.  
\begin{equation}
\begin{array}{cc}
|\Phi^+\rangle&|\Psi^+\rangle\\
|\Psi^-\rangle&|\Phi^-\rangle
\end{array},
\label{Q2x2}
\end{equation}
is \emph{not} a pair of orthogonal QLS. This is simple to observe if we take into account Definition \ref{OLSdef}. Indeed, the sum of every column of the arrangement (\ref{Q2x2}) determines a 1-uniform state but sum of every row gives
 a separable state. It is possible to prove that such $\QOA(r,4,2,2)$ 
does not exist for any $r\in\mathbb{N}$, which is related to the fact that an $\AME(4,2)$ state does not exist \cite{HS00}.

As a further example, we consider 
the following array consisting of three classical and three quantum columns, 
\begin{equation}\label{QOA8623}
\QOA(8,3_C+3_Q,2,3) =
\begin{array}{cccc}
|0\rangle&|0\rangle&|0\rangle&|GHZ_{000}\rangle\\
|0\rangle&|0\rangle&|1\rangle&|GHZ_{001}\rangle\\
|0\rangle&|1\rangle&|0\rangle&|GHZ_{010}\rangle\\
|0\rangle&|1\rangle&|1\rangle&|GHZ_{011}\rangle\\
|1\rangle&|0\rangle&|0\rangle&|GHZ_{100}\rangle\\
|1\rangle&|0\rangle&|1\rangle&|GHZ_{101}\rangle\\
|1\rangle&|1\rangle&|0\rangle&|GHZ_{110}\rangle\\
|1\rangle&|1\rangle&|1\rangle&|GHZ_{111}\rangle
\end{array},
\end{equation}
that produces three MOQLC of size 2:\vspace{0.3cm}
\begin{equation}\label{cube}
\begin{array}{l}
MOQLC(2)=\\ [0.3cm]
\begin{tikzpicture}
[
back line/.style={dashed},
cross line/.style={preaction={draw=white, -,
line width=6pt}}]
\matrix (m) [matrix of math nodes,
row sep=2.9em, column sep=0.1em,
text height=0.25ex,
text depth=0.25ex]
{
& \ket{GHZ_{100}} & & \ket{GHZ_{101}} \\
\ket{GHZ_{000}} & & \ket{GHZ_{001}} \\
& \ket{GHZ_{110}} & & \ket{GHZ_{111}} \\
\ket{GHZ_{010}} & & \ket{GHZ_{011}} \\
};
\path[-]
(m-1-2) edge [back line] (m-1-4)
edge [back line] (m-2-1)
edge [back line] (m-3-2)
(m-1-4) edge [back line] (m-3-4)
edge [back line] (m-2-3)
(m-2-1) edge [back line] (m-2-3)
edge [back line] (m-4-1)
(m-3-2) edge [back line] (m-3-4)
edge [back line] (m-4-1)
(m-4-1) edge [back line] (m-4-3)
(m-3-4) edge [back line] (m-4-3)
(m-2-3) edge [back line] (m-4-3);
\end{tikzpicture}
\end{array}
\end{equation}
Here, the tri-partite orthonormal basis is composed by
eight states locally equivalent to the 3-qubit GHZ state,
$|GHZ\rangle=|000\rangle + |111\rangle$.
These states form an orthonormal basis in 
${\cal H}_8={\cal H}_2\otimes {\cal H}_2\otimes {\cal H}_2$,
\begin{equation}\label{GHZ}
|GHZ_{ijk}\rangle =  (-1)^{\alpha_{ijk}}\sigma_{i}\otimes\sigma_{j}\otimes\sigma_{k} |GHZ\rangle,
\end{equation}
where $i,j,k=\{0,1\}$ and $\sigma_0$ and $\sigma_1$ represent the Pauli matrices $\sigma_x$ and $\sigma_z$, respectively. Global phases given by $\alpha_{ijk}=1$ if $i=j=k$ and $\alpha_{ijk}=0$ otherwise are added to states (\ref{GHZ}) forming the GHZ basis, in such a way that the construction (\ref{cube}) forms a quantum Latin cube.\medskip

Let us show that a $\QOA(r,N,d,k)$ determines a
 $k$-uniform state of $N$ qudits, in the same way as  
an irredundant $\OA(r,N,d,k)$ implies a $k$-uniform
 state of $N$ subsystems with $d$ levels each \cite{GZ14}.

\begin{prop}
\label{prop2}
The sum of rows of a $\QOA(r,N,d,k)$ produces a $k$-uniform state of 
a quantum system composed of $N$ parties with $d$ levels each.
\end{prop}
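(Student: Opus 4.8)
The plan is to realise the $k$-uniform state as the coherent superposition of all the rows, $\ket{\Psi}=\sum_{j=0}^{r-1}\ket{\varphi_j}$, and to show that each of its reductions to $k$ parties is proportional to the identity. First I would fix an arbitrary subset $\{i_1,\dots,i_{N-k}\}$ of parties to be traced out. Writing $\ket{\Psi}\bra{\Psi}=\sum_{j}\ket{\varphi_j}\bra{\varphi_j}+\sum_{j\neq j'}\ket{\varphi_j}\bra{\varphi_{j'}}$ and applying $\Tr_{i_1,\dots,i_{N-k}}$ splits the reduction $\rho$ into a diagonal part $\sum_{j}\Tr_{i_1,\dots,i_{N-k}}(\ket{\varphi_j}\bra{\varphi_j})$ and an off-diagonal (cross) part $\sum_{j\neq j'}\Tr_{i_1,\dots,i_{N-k}}(\ket{\varphi_j}\bra{\varphi_{j'}})$. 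The diagonal part is exactly the left-hand side of the defining relation in Definition \ref{QOAdef}, so it equals a fixed multiple of the identity $\mathbb{I}_k$ on the retained $k$ parties, and this value does not depend on which parties were traced out.

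The crux of the argument, and the step I expect to be the main obstacle, is to show that the cross terms $\Tr_{i_1,\dots,i_{N-k}}(\ket{\varphi_j}\bra{\varphi_{j'}})$ vanish for all $j\neq j'$. This is the quantum counterpart of the irredundancy condition that makes a classical $\OA(r,N,d,k)$ yield a $k$-uniform state: one needs the reductions of two distinct rows onto every complementary set of $N-k$ parties to be orthogonal, so that the partial trace of $\ket{\varphi_j}\bra{\varphi_{j'}}$ collapses to zero. I would derive this from the orthogonality structure carried by the rows of the QOA --- that distinct rows stay mutually distinguishable on each complementary block of $N-k$ parties --- and cross-check it against the worked example $\QOA(4,5,2,2)$, where the classical ``address'' columns supply precisely this orthogonality. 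The delicate point is that the condition must hold for every choice of traced-out subset at once, not merely for a single reduction, so I would phrase it as a genuine irredundancy hypothesis on the QOA rather than assume it silently.

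Once the cross terms are eliminated, $\rho$ reduces to the diagonal sum, and Definition \ref{QOAdef} gives $\rho\propto\mathbb{I}_k$. I would then pin down the proportionality constant by taking the trace of both sides and using that each $\ket{\varphi_j}$ is normalized, confirming that $\rho$ is the maximally mixed state on $k$ qudits; this is also where I would reconcile the normalization factor appearing in Definition \ref{QOAdef}. Since $\{i_1,\dots,i_{N-k}\}$ was arbitrary, every $k$-party reduction of $\ket{\Psi}$ is maximally mixed, which is exactly the assertion that $\ket{\Psi}$ is $k$-uniform.
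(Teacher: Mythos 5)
Your proposal is correct and follows the route the authors intend, but it is substantially more careful than the proof actually printed in the paper, and the point you flag as "the main obstacle" is a genuine one. The paper's entire argument is the single sentence that every reduction to $k$ columns forms a POVM whose elements sum to the identity; this controls only the diagonal part $\sum_{j}\Tr_{i_1,\dots,i_{N-k}}\bigl(\ket{\varphi_j}\bra{\varphi_j}\bigr)$ of the reduced density matrix, exactly as in your decomposition, and says nothing about the cross terms $j\neq j'$. You are right that the displayed condition in Definition \ref{QOAdef} does not by itself force those cross terms to vanish: for instance, the four rows $\ket{00},\ket{01},\ket{10},\ket{11}$ satisfy the uniformity equation for $N=2$, $k=1$, yet their sum is a product state. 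The missing ingredient is precisely the quantum analogue of irredundancy --- that for every complementary set of $N-k$ parties the reductions of distinct rows onto that set have mutually orthogonal supports, so that $\Tr_{i_1,\dots,i_{N-k}}\bigl(\ket{\varphi_j}\bra{\varphi_{j'}}\bigr)=0$. The authors use this silently: it is what they verify separately as condition \emph{(ii)} ("every reduction to three columns contains orthogonal rows") in the proof of Proposition \ref{prop6}, and it is why they insist after Proposition \ref{prop2} that QOA generalize \emph{irredundant} OA and that the array in Eq.~(\ref{OA4321}) does not qualify, even though it satisfies the uniformity equation literally. So your decision to state irredundancy as an explicit hypothesis, rather than assume it, is the right call; with that hypothesis in place your diagonal-plus-cross-term argument, together with the trace computation fixing the constant (and incidentally repairing the normalization factor in Definition \ref{QOAdef}, whose left- and right-hand sides do not have equal trace as written), gives a complete proof.
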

\begin{proof}
Every reduction to $k$ columns of a $\QOA(r,N,d,k)$ defines a POVM, and thus the sum of its elements produces the identity operator.   
\end{proof}
For instance, $\QOA(4,5,2,2)$ of Eq.(\ref{QOA4522}), related to the squares
 (\ref{3MOQLS2}), produces the $2$--uniform five-qubit state \cite{LMPZ96}
\begin{eqnarray}\label{AME52}
\AME(5,2) &=&|000\rangle|\Phi^{+}\rangle+|011\rangle|\Psi^{+}\rangle+\nonumber\\
&&|101\rangle|\Psi^{-}\rangle+|110\rangle|\Phi^{-}\rangle .
\end{eqnarray}
Furthermore, the array $\QOA(8,6,2,3)$ presented in Eq.(\ref{QOA8623}), 
and related to the cube (\ref{cube}), produces the AME state for six-qubit systems 
\cite{BPBZCP07},
\begin{equation}\label{AME62}
\AME(6,2)=\sum_{x,y,z=0}^1|x,y,z\rangle|GHZ_{xyz}\rangle.
\end{equation}
Proposition \ref{prop2} reveals that QOA generalizes the notion of irredundant OA and \emph{not} the entire set of OA. For instance, the non-irredundant classical array, 
 \begin{equation}\label{OA4321}
\OA(4,3,2,1) =\begin{array}{ccc}
0&0&0\\
0&1&0\\
1&0&1\\
1&1&1
\end{array},
\end{equation}
is not equivalent to a $\QOA(r,3,2,1)$ for any $r$. This is so because OA (\ref{OA4321}) does not produce a 1-uniform state and, by definition, any QOA produces at least a 1-uniform state. The key difference existing between classical and quantum OA relies on the fact that the action of removing columns in classical OA is \emph{not} equivalent to taking
the partial trace in the quantum case.
Precisely, these operations are equivalent only if the orthogonal array considered
 is irredundant. Furthermore, the juxtaposition 
of two OA is still an OA, whereas the same statement does not hold for QOA. 
This is connected to the fact that the sum of two $k$-uniform states is not necessarily a $k$-uniform (see in Section \ref{S6}).
 Nonetheless, all classical $\OA(d^k,2+m,d,2)$, 
associated to $m$ mutually orthogonal hypercubes size $d$ are irredundant \cite{GZ14}. Thus, any set of $m$ mutually orthogonal Latin hypercubes, in particular any set of $m$ MOQLS, is linked to a QOA -- see Fig. \ref{Fig2}. As a natural generalization of this result, we have the following proposition.
\begin{prop}\label{prop3}
A $\QOA(d^k,k+m,d,k)$ generates $m$ MOQLH of size $d$ in dimension $k$. 
\end{prop}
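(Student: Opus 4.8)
The plan is to lift the classical passage from an index-unity $\OA(d^k,k+m,d,k)$ to $m$ MOLH: use $k$ columns of the array to index the $d^k$ cells of the hypercubes and the remaining $m$ columns to hold the (now quantum) entries. First I would take the $k$ index columns to be classical, so that each row factorizes as $\ket{\varphi_j}=\ket{i_1\cdots i_k}\otimes\ket{\psi_{i_1,\dots,i_k}}$ with the first factor a computational basis state. Since the array has index unity ($r=d^k$) and its reduction to these $k$ columns is proportional to $\mathbb{I}$, each label $\ket{i_1\cdots i_k}$ occurs in exactly one row; hence rows biject with hypercube positions and the normalized $m$-qudit entries $\ket{\psi_{i_1,\dots,i_k}}$ are well defined. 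These $d^k$ entries are the candidate cells, arranged in a $k$-dimensional array of side $d$; note that the orthogonality below requires $d^k\le d^m$, i.e.\ $k\le m$, exactly the range allowed in Definition~\ref{defi6}.

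To establish the first property of Definition~\ref{defi6} I would form the state $\ket{\Phi}=\sum_{i_1,\dots,i_k}\ket{i_1\cdots i_k}\ket{\psi_{i_1,\dots,i_k}}$, which is the sum of the rows and is $k$-uniform by Proposition~\ref{prop2}. Tracing out the $m$ entry parties leaves the reduction on the $k$ index parties, whose matrix elements are the overlaps $\braket{\psi_{i_1,\dots,i_k}}{\psi_{i_1',\dots,i_k'}}$; as $k$-uniformity forces this reduction to be proportional to $\mathbb{I}$, every overlap with $(i_1,\dots,i_k)\neq(i_1',\dots,i_k')$ must vanish, which is precisely the orthogonality of the $d^k$ cells.

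The crux is the second property, that every edge sum $\sum_{i_s=0}^{d-1}\ket{\psi_{i_1,\dots,i_s,\dots,i_k}}$ (one index $i_s$ running, the rest fixed) is $1$-uniform. The subtlety is that this is a \emph{coherent} superposition over $i_s$, whereas partial traces of $\ket{\Phi}$ only ever generate \emph{incoherent} sums over $i_s$, so the edge-sum reduction cannot simply be read off a reduction of $\ket{\Phi}$. I would circumvent this by viewing the $d$ cells lying on a fixed edge as the rows of a sub-array and showing they constitute a $\QOA(d,m,d,1)$ in the sense of Definition~\ref{QOAdef}; concretely, that for every entry-column $t$ one has $\sum_{i_s}\Tr_{\neq t}\ketbra{\psi_{i_1,\dots,i_s,\dots,i_k}}{\psi_{i_1,\dots,i_s,\dots,i_k}}\propto\mathbb{I}_d$, with $\Tr_{\neq t}$ the partial trace over all entry parties except $t$. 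This identity is furnished by the $k$-uniformity of $\ket{\Phi}$: reducing $\ket{\Phi}$ to the $k$ columns formed by entry-column $t$ together with the $k-1$ index columns other than $s$ gives a matrix proportional to $\mathbb{I}$, and its diagonal block at the fixed indices is exactly the sum above. With the edge exhibited as a strength-one QOA, Proposition~\ref{prop2} applied to this sub-array immediately yields that its row sum, i.e.\ the edge sum, is $1$-uniform.

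Combining the two properties shows that $\{\ket{\psi_{i_1,\dots,i_k}}\}$ satisfies Definition~\ref{defi6}, so a $\QOA(d^k,k+m,d,k)$ indeed generates $m$ MOQLH of size $d$ in dimension $k$. I expect the main obstacle to be precisely this coherent-versus-incoherent gap in the edge-sum step: the QOA axioms and the reductions of $\ket{\Phi}$ directly control only incoherent sums over $i_s$, and it is the appeal to Proposition~\ref{prop2}---which already converts a strength-$k$ (here strength-one) incoherent condition into $k$-uniformity of a coherent row sum---that bridges the gap. A secondary point to state carefully is the reduction to classical index columns, the exact quantum analogue of the classical fact that the first $k$ columns of an index-unity $\OA(d^k,k+m,d,k)$ run through all $d^k$ labels exactly once.
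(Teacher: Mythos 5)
Your construction is exactly the one the paper adopts---the first $k$ classical columns of the index-unity array address the $d^k$ cells and the remaining $m$ columns supply the entries---but the paper gives no proof of Proposition~\ref{prop3}: it only remarks that MOQLH arise from QOA ``in the same way'' as MOLH arise from classical OA. Everything you add beyond that sentence is your own, and it is sound. Deriving the orthogonality of the $d^k$ cells from the maximal mixedness of the reduction of $\ket{\Phi}$ to the $k$ index parties is the right argument, and your handling of the edge sums is the genuinely nontrivial contribution: you correctly isolate the coherent-versus-incoherent issue (the edge sum $\sum_{i_s}\ket{\psi_{i_1,\dots,i_s,\dots,i_k}}$ is not itself a reduction of $\ket{\Phi}$), and your device of exhibiting each edge as a $\QOA(d,m,d,1)$---reading the required single-column condition off the diagonal block of the reduction of $\ket{\Phi}$ to entry party $t$ together with the $k-1$ index parties other than $s$---cleanly reduces the claim to Proposition~\ref{prop2}. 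Two caveats, both of which you already flag and both of which are inherited from the paper rather than introduced by you: (i) the argument presupposes that $k$ columns of the QOA can be taken classical (separable computational-basis states), an assumption the paper also makes implicitly and does not justify for a general index-unity QOA; (ii) the residual gap concerning cross terms $\mathrm{Tr}\bigl(\ketbra{\varphi_j}{\varphi_{j'}}\bigr)$ with $j\neq j'$ is absorbed into Proposition~\ref{prop2}, whose one-line proof in the paper glosses over exactly this point. Within the paper's own logic your argument is complete, and it supplies a verification of both defining properties of Definition~\ref{defi6} that the paper omits.
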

We generate MOQLH from QOA in the same way than MOLH arise from classical OA. That is, first $k$ classical columns of a QOA address the location of entries and the remaining $m$ 
columns determine the values of every entry of the quantum Latin arrangement.\medskip
 
Let us discuss some important open issues. The lowest dimensional open case for MOQLS occurs for $k=m=2$ and $d=6$, 
that is, two OQLS of size six. 
It is well-known that the classical problem of 
\emph{36 officers of Euler} has no solution \cite{E82},
as there are no orthogonal Latin squares of order six.
 After an exhaustive numerical exploration 
we are tempted to advance the following conjecture.
\begin{conje}\label{conje2}
Two orthogonal QLS of size $6$ do not exist.
\end{conje}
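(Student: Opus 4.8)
The plan is to show that the three constraints of Definition~\ref{OLSdef} are jointly infeasible for $d=6$. First I would recast the object in linear-algebraic terms. A pair of orthogonal $\QLS(6)$ is a family of $36$ states $\ket{\psi_{ij}}\in\mathcal{H}_6\otimes\mathcal{H}_6$ that (i) form an orthonormal basis of $\mathbb{C}^{6}\otimes\mathbb{C}^{6}$ and (ii) have every row sum $\sum_{j}\ket{\psi_{ij}}$ and every column sum $\sum_{i}\ket{\psi_{ij}}$ equal to a $1$-uniform state. Arranging the $36$ vectors as the columns of a matrix $U$, condition (i) reads $U\in\mathcal U(36)$, while condition (ii) becomes a fixed family of quadratic equations in the entries of $U$. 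By Proposition~\ref{prop3} this is the same as asking for an index-unity $\QOA(36,4,6,2)$ whose two location columns are classical, and by the correspondence of Section~\ref{S5} it is equivalent to the existence of a suitably structured multiunitary matrix of order $36$. The aim is to prove that this constrained set is empty.

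My first attack would reduce the quantum problem to its classical shadow. Applying the completely dephasing channel in the computational basis of the two value-carrying parties maps any candidate to a non-negative combinatorial array; if this array were forced to be a genuine pair of orthogonal Latin squares, Euler's theorem on the insolubility of the $36$ officers problem \cite{E82} would give an immediate contradiction. In parallel I would write conditions (i)--(ii) as an explicit polynomial feasibility system and attempt to certify emptiness, pruning with the Scott bounds~(\ref{Scott}) and with entanglement invariants, and using the nonexistence argument for $\QOA(r,4,2,2)$ and $\AME(4,2)$ invoked after Eq.~(\ref{Q2x2}) as a structural template.

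The hard part will be the genuinely entangled, \emph{essentially quantum} arrangements, and I expect this to be the decisive obstacle. Dephasing generically destroys the $1$-uniform row and column conditions, so the reduction to Euler controls only the near-classical sector and is silent about entries that are entangled across the two squares. It is exactly there that the continuous, high-dimensional solution manifold resists classical combinatorial arguments, while the polynomial system (over $36^{2}$ real parameters) is far too large to settle by direct elimination. A rigorous impossibility proof therefore appears to need either a new entanglement invariant preserved by (i)--(ii) yet obstructed at $d=6$, or a complete computer-assisted classification of the constraint variety; in the absence of such a tool the statement can presently be supported only by the exhaustive numerical search mentioned above. I would also stress that the claim is tightly bound to the existence question for $\AME(4,6)$-type structures, so any construction in that direction would bear directly on---and could overturn---the conjecture.
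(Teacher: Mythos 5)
There is nothing in the paper to compare your attempt against: the statement is presented as Conjecture~\ref{conje2}, advanced only ``after an exhaustive numerical exploration,'' and the paper offers no proof. Your proposal, read as a proof, therefore has a genuine and decisive gap, which to your credit you identify yourself: the only rigorous ingredient you bring to bear, Euler's insolubility of the $36$ officers problem, controls exactly the sector that is not in question. Dephasing the value-carrying parties does not, in general, produce a pair of classical orthogonal Latin squares---the row and column $1$-uniformity conditions of Definition~\ref{OLSdef} are not preserved by the dephasing channel once entries are entangled across the two squares---so the classical obstruction says nothing about essentially quantum arrangements, which is precisely the new territory the conjecture is about. Your auxiliary pruning tools also do no work here: the bound~(\ref{Scott}) for $N=4$, $d=6$ allows up to $2(d^2-1)=70$ MOQLS, so it cannot exclude two of them, and the nonexistence of $\QOA(r,4,2,2)$ rests on the Higuchi--Sudbery argument for $\AME(4,2)$, which is specific to qubits and has no known analogue at $d=6$ (indeed $\AME(4,d)$ exists for every $d\neq 2,6$, so any ``template'' argument would have to fail for all those $d$). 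What remains of your proposal is a correct reformulation of the problem (equivalence to an index-unity $\QOA(36,4,6,2)$, to a $2$-unitary matrix of order $36$, and to $\AME(4,6)$) plus an appeal to numerical search---which is exactly the epistemic status the paper itself assigns to the claim. In short: you have not proved the conjecture, the paper has not either, and your own closing caveat (that a construction on the $\AME(4,6)$ side would overturn the statement) is the honest summary of where the problem stands in this work.
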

This conjecture is equivalent to say that the famous problem of Euler has no
solution also in the generalized quantum setup, as $36$ officers 
are now allowed to be described by entangled quantum states.

It also would imply a negative answer to the existence of AME state for a system composed of four systems with $6$ levels each -- compare 
related studies in Refs. \cite{GZ14,Hu17}. The existence of AME(4,6) state currently represents the lowest dimensional open case, and the only open case in the family of states AME(4,$d$). We recall that AME(4,$d$) exist for any $d\neq2,6$. Indeed, all of these states have minimal support and can be easily generated from two classical MOLS($d$), equivalently from OA($d^2$,4,$d$,2) \cite{GZ14}. \bigskip

Let us now relate quantum Latin arrangements defined through QOA with those established in Definition \ref{defi6}. The special subset of MOQLH satisfying Definition \ref{defi6} produce highly  entangled $k$-uniform states, (e.g. cluster states),
robust under the presence of a noisy environment.
Indeed, we might interpret the hyper-faces of MOQLS as a protection of multipartite entanglement contained in lower dimensional faces of the hypercube. For instance, the generalized,
$N$--qudit GHZ state, $\sum_{i=0}^{d-1}|i\rangle^{\otimes N}$,
 defines the following set of $N$ MOQLH of size $d$ defined in dimension $k=1$:
\begin{equation}\label{GHZNd}
|\underbrace{0,\dots,0}_{N}\rangle\, -- \,|\underbrace{1,\dots,1}_{N}\rangle\, -- \,|\underbrace{d-1,\dots,d-1}_{N}\rangle.
\end{equation}
Here, the double line ($--$) denotes edges in the same way as depicted before, c.f. Definition \ref{OLCdef}. Arrangement (\ref{GHZNd}) has a unique 1-dimensional face, evidencing
 fragility of entanglement of GHZ states with respect to the noisy environment. 
On the other hand, the square (\ref{MOQLS}) produces a state having a higher robustness, as the square transforms to an edge under the presence of a local measurement on any of its parties. For instance, the 5-qubit state produced by three MOQLS of size $d=2$, see Eq.(\ref{3MOQLS2}), defines a \emph{perfect code} for quantum error correction \cite{LMPZ96}.

In order to understand robustness of entanglement produced by states coming from Definition \ref{defi6}, we need to recall two quantifiers of robustness \cite{BR01}: \medskip

\noindent\emph{Maximum connectedness $(\mathcal{C})$:} a multipartite quantum state is maximally connected if any two qudits can be projected, with certainty, into a Bell state by implementing local measurements on the complementary subset of parties. \medskip

\noindent\emph{Persistency of entanglement $(\mathcal{P})$:} the minimal number of local measurements to be implemented such that, for all measurement outcomes, 
the state is completely disentangled.\medskip

Now we are in position to establish the following result.
\begin{prop}\label{prop4}
A set of $m$ MOQLH $\{|\varphi_{i_1,\dots,i_{m}}\rangle\}$ of size $d$ defined in dimension $k$, composed by $d^k$ states of $m$ qudit systems having $d$ levels each, defines a $k$-uniform state for $N=k+m$ qudit systems, given by
\begin{equation}\label{robust}
|\phi\rangle=\sum_{i_1,\dots,i_{k}=0}^{d-1}|i_1,\dots,i_{k}\rangle|\varphi_{i_1,\dots,i_{m}}\rangle.
\end{equation} 
Even more, if  $k^{\prime}\leq k$ subsystems belonging to the first $k$ qudits are measured then the remaining entangled state is $(k-k^{\prime})$-uniform. In particular, if a state $|\phi\rangle$ can be written in the form (\ref{robust}) for its $\binom{N}{k}$ possible bipartitions of $k$ parties out of $N$ then it has maximum connectedness $\mathcal{C}=k-1$ and persistency of entanglement $\mathcal{P}\geq k$. 
\end{prop}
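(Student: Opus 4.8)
The plan is to establish Proposition~\ref{prop4} in three logically separate stages, corresponding to the three assertions it contains: first the basic $k$-uniformity of $|\phi\rangle$, then the ``nested'' uniformity under partial measurement, and finally the bounds on $\mathcal{C}$ and $\mathcal{P}$.

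First I would prove $k$-uniformity of the state~(\ref{robust}). Since $\{|\varphi_{i_1,\dots,i_m}\rangle\}$ is a set of $m$ MOQLH satisfying Definition~\ref{defi6}, the arrangement is exactly a $\QOA(d^k,k+m,d,k)$ by Proposition~\ref{prop3} (read in reverse: the $k$ address columns $|i_1\rangle\cdots|i_k\rangle$ together with the $m$ value columns form a quantum orthogonal array of strength $k$). The state~(\ref{robust}) is precisely the sum of rows of this QOA, so $k$-uniformity is immediate from Proposition~\ref{prop2}. I would phrase this as the one-line core of the argument and verify that the index-unity count $r=d^k$ matches the summation range in~(\ref{robust}).

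Next I would treat the measurement statement. Suppose $k'\le k$ of the first $k$ qudits are measured, say in the computational basis, fixing their labels to some values $a_1,\dots,a_{k'}$. The post-measurement state is obtained from~(\ref{robust}) by restricting the corresponding indices, yielding $\sum |i_{k'+1},\dots,i_k\rangle|\varphi_{\dots}\rangle$ over the remaining $d^{k-k'}$ address choices. The key observation is that fixing $k'$ of the address coordinates leaves a sub-arrangement that is itself a set of $m$ MOQLH in dimension $k-k'$: property~2 of Definition~\ref{defi6} (the edge-sum condition giving a 1-uniform state along each of the $m$ directions) survives restriction to a sub-hypercube, since an edge of the smaller hypercube is also an edge of the original. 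Hence the restricted state is $(k-k')$-uniform by the first part applied in dimension $k-k'$. The main obstacle here is checking that the measurement does not spoil the orthogonality and edge-sum conditions; I would argue this by noting that fixing coordinates selects a coordinate sub-hypercube whose edges are a subset of the original edges, so both Definition~\ref{defi6} properties are inherited verbatim, and the argument is independent of which basis is used for the measurement because any measurement outcome merely relabels via a local unitary on the measured party, leaving reductions maximally mixed.

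Finally I would derive the robustness quantifiers under the hypothesis that $|\phi\rangle$ admits the form~(\ref{robust}) simultaneously for all $\binom{N}{k}$ choices of which $k$ parties play the role of the address register. For maximum connectedness $\mathcal{C}=k-1$: given any two target qudits, I would choose a bipartition in which those two parties lie among the $k$ address parties, measure the other $k-2$ address parties (that is, $k-2$ measurements on the complement of the chosen pair's support), and invoke the nested uniformity just established to reduce to a $2$-uniform state on a block containing the two targets, from which a suitable local measurement on the remaining complementary parties projects the pair into a Bell state; this realizes connectedness of the two qudits through $k-1$ collaborating parties. For persistency $\mathcal{P}\ge k$: by $k$-uniformity every reduction to $k$ parties is maximally mixed, so no set of fewer than $k$ local measurements can fully disentangle the state, since after $k-1$ measurements the residual state is still $1$-uniform by the nested-uniformity result and hence retains genuine correlations. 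The hard part will be making the connectedness argument fully precise, namely exhibiting the explicit local measurement that performs the final Bell projection; I would handle this by appealing to the $2$-uniform (AME-of-two-blocks) structure guaranteed by the nested uniformity, for which a projective measurement onto a maximally entangled basis on the complement deterministically steers the target pair into a fixed Bell state.
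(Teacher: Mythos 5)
Your proof is correct and follows essentially the same route as the paper: the $m$ MOQLH are identified with a $\QOA(d^k,k+m,d,k)$ so that Proposition~\ref{prop2} yields $k$-uniformity, and the connectedness and persistency bounds are traced back to the edge-sum ($1$-uniformity) property of Definition~\ref{defi6}. You in fact supply more detail than the published proof, which asserts the MOQLH-to-QOA step without comment and does not spell out the nested $(k-k')$-uniformity argument at all.
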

\begin{proof}
The state $|\phi\rangle$ defined for $N=k+m$  subsystems with $d$ levels each 
is $k$-uniform, since the following two facts hold: 
\emph{(i)} the set of $m$ MOQLH defined in dimension $k$ define a QOA($d^k,N,d,k$) and \emph{(ii)} Prop. \ref{prop2} applies. The fact that maximum connectedness is at least $\mathcal{C}=k-1$ comes straight from Property \emph{2} in Definition \ref{defi6}. By the same reason, we have $\mathcal{P}\geq k$, as an additional measurement may possibly destroy the $1$-uniformity of the remaining $m$-partite entangled states $|\phi^{\prime}\rangle=\sum_{i_1,\dots,i_{k}=0}^{d-1}|\varphi_{i_1,\dots,i_{m}}\rangle$.
\end{proof}
For instance, the state AME(5,2) defined in (\ref{AME52}), constructed through MOQLS (\ref{3MOQLS2}), satisfies $\mathcal{C}=1$, and defines a $1$-dimensional 
subspace protected under decoherence \cite{LMPZ96}.\bigskip

Let us summarize some important connections existing between classical and quantum arrangements and $k$-uniform states derived along this section. First, we start considering previously known connections. The following standard ('classical') notions are equivalent:
\begin{enumerate}
\item [\emph{1.}] QOA with fully separable columns ($\equiv$ OA)

[e.g. QOA$(9,4_C+0_Q,3,2)\equiv$OA(9,4,3,2) in Eq.(\ref{MOLS3})]
\item [\emph{2.}] Sets of $m$ separable MOQLH($d$) in dimension $k$\\ ($\equiv$ MOLH)
[e.g. classical LS$_A$ and LS$_B$ in Eq.(\ref{MOLS3})]
\item [\emph{3.}] $N$ qudit $k$-uniform states with minimal support

[e.g. AME(4,3) state in Eq.(\ref{AME43})]
\end{enumerate}
Here, the symbol $\equiv$ denotes  equivalence under local unitary operations applied to columns 
of an array.
  Connection \emph{1}-\emph{2} is well known in mathematics since the early times of orthogonal arrays theory  -- see Chapter 8 in Ref. \cite{HSS99}. 
Connections \emph{1}-\emph{3} and \emph{2}-\emph{3} have been recently established, 
see Refs. \cite{GZ14} and \cite{GALRZ15}, respectively. 
Furthermore, in the case of $N=2k$ there exists a link between AME states and 
multi-unitary permutation matrices \cite{GALRZ15}. 

In a similar manner, the following generalized ('quantum') notions are equivalent, 
\begin{enumerate}
\item [\emph{a.}] QOA with entangled columns ($\not\equiv$ OA)

[e.g. Eqs.(\ref{QOA4522}) and (\ref{QOA8623})]
\item [\emph{b.}] Entangled MOQLH ($\not\equiv$ fully separable MOQLH)

[e.g. Eqs. (\ref{3MOQLS2})]

\item [\emph{c.}] $N$ qudit $k$-uniform states with non-minimal support.
($\not\equiv$ to minimal support states)

[e.g. Eqs.(\ref{AME52}) and (\ref{AME62})]
\end{enumerate}
The above  relations \emph{a}-\emph{b}, \emph{a}-\emph{c} and \emph{b}-\emph{c} 
form a novel contribution of the present work.
A further connection to general multi-unitary matrices 
 occurs when $N=2k$ \cite{GALRZ15}. 
 
Note that a QOA having at least one pair of entangled columns necessarily implies 
existence of entangled OQLS that cannot be separated, 
in the same way as entangled states cannot be represented as
the tensor product of two single party pure states.\medskip

\begin{figure}[!h]
\centering
{\includegraphics[width=8.5cm]{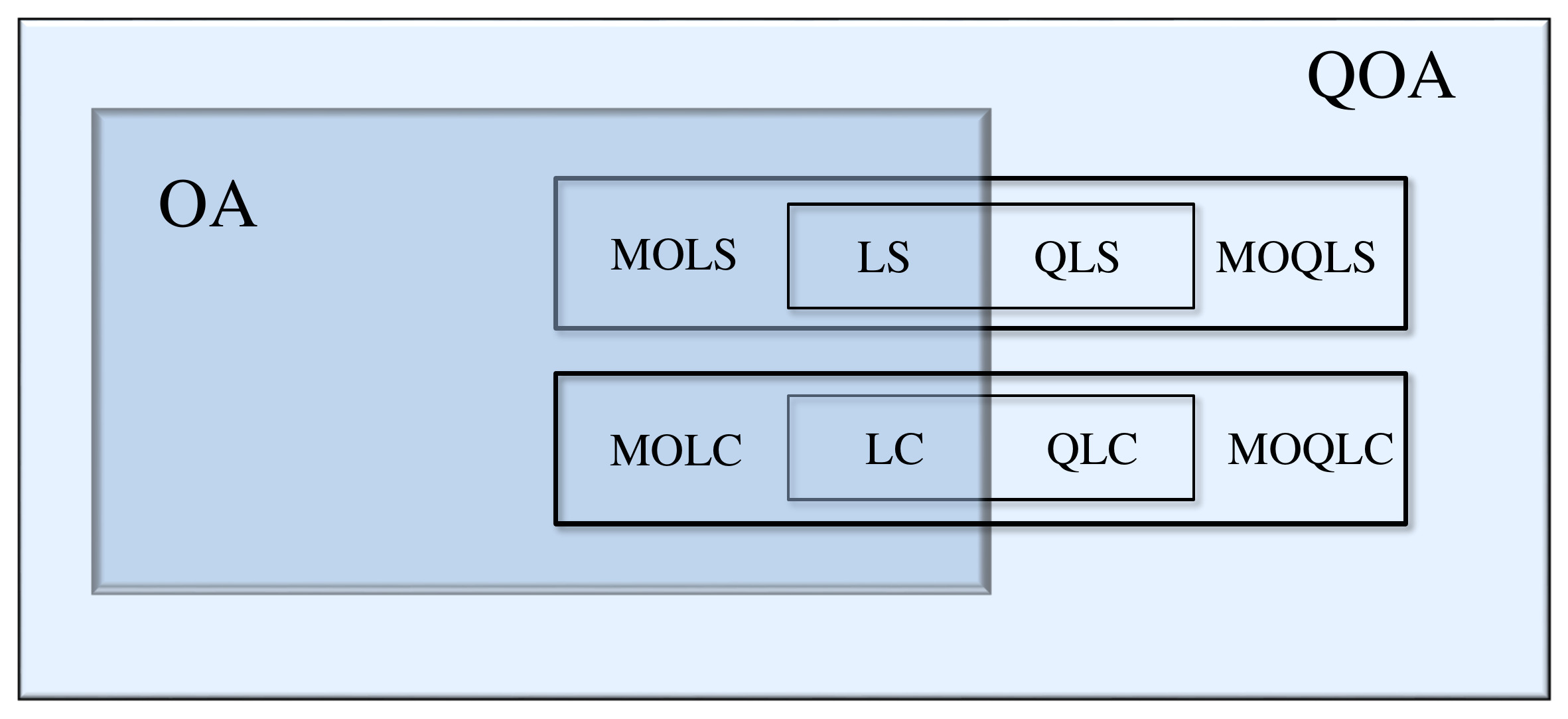}} 
\caption{Generalization of orthogonal arrays (OA) to quantum orthogonal arrays (QOA). This extension allows us to naturally generalize some classical arrangements to quantum mechanics: Quantum Latin squares (QLS), Quantum Latin cubes (QLC) and Mutually orthogonal quantum arrangements (MOQLS and MOQLC).}
\label{Fig2}
\end{figure}

\section{QOA and multiunitary matrices}\label{S5}
Let us consider a quantum system consisting of $N=2k$ parties having $d$ level systems each, where $k\ge1$ and the system is prepared in the pure state
\begin{equation}\label{phia}
|\phi\rangle=\hspace{-0.2cm}\sum_{n_1,\dots,n_k\atop\nu_1,\dots,\nu_k}a_{n_1,\dots,n_k\atop\nu_1,\dots,\nu_k}|n_1,\dots,n_k,\nu_1,\dots,\nu_k\rangle,
\end{equation} 
where every sum goes from $0$ to $d-1$. The matrix 
\begin{equation*}
(A)_{n_1,\dots,n_k\atop\nu_1,\dots,\nu_k}=\langle n_1,\dots,n_k|A|\nu_1,\dots,\nu_k\rangle=a_{n_1,\dots,n_k\atop\nu_1,\dots,\nu_k},
\end{equation*}
is called $k$-unitary if it is unitary for all possible $\binom{2k}{k}$ reordering of its indices, corresponding to all possible choices of $k$ indices out of $2k$. Matrices $k$-unitary for $k>1$ are called \emph{multiunitary} \cite{GALRZ15}. Furthermore, multiunitary matrices are one-to-one connected with \emph{perfect tensors} \cite{PYHP15}, which play an important role in construction of
 holographic codes. 

For instance, a matrix $A$ is 2-unitary if $A$, $A^{T_2}$ and $A^{R}$ are unitary, where $T_2$ and $R$ stand for partial transposition and reshuffling operations, respectively --
see Appendix 2 in Ref. \cite{GALRZ15}.
 As a remarkable property, a matrix $A$ is $k$-unitary if and only 
if the state (\ref{phia}) is $\AME(2k,d)$.

A multiunitary matrix $A$ of size $d^k$ allows us to write a multipartite pure state as the action of a non-local gate acting on $k$ parties over a generalized Bell like-state, that is
\begin{eqnarray}
\label{phia2}
\hspace{-0.4cm}|\phi\rangle\hspace{-0.1cm}&=&\hspace{-0.3cm}\sum_{n_1,\dots,n_k}\hspace{-0.2cm}\bigl(\mathbb{I}_{d^k}\otimes A\bigr)\,|n_1,\dots,n_k\rangle|n_1,\dots,n_k\rangle\nonumber\\
&=&\hspace{-0.2cm}\bigl(\mathbb{I}_{d^k}\otimes A\bigr)\hspace{-0.3cm}\sum_{n_1,\dots,n_k}|n_1,\dots,n_k\rangle|n_1,\dots,n_k\rangle.
\end{eqnarray} 
For any AME state $|\phi\rangle$, the operator $A$ is a non-local $k$-unitary gate acting on $k$ parties. Furthermore, if $A$ is a 2-unitary matrix of size $d^2$, 
then the quantum arrangement
\begin{equation}\label{QOAA}
\begin{array}{ccc}
A|0,0\rangle&\dots&A|0,d-1\rangle\\
\vdots&&\vdots\\
A|d-1,0\rangle&\dots&A|d-1,d-1\rangle
\end{array},
\end{equation}
forms a pair of QLS. In particular, $A$ is a  2-unitary permutation matrix if and only if the arrangement (\ref{QOAA}) is a classical MOLS($d$). This implies that a matrix $A$ being 2-unitary but not permutation defines a quantum QOA. Even more, if the QOA is not equivalent under LOCC to a QOA having associated a permutation matrix $A$ then the QOA is essentially quantum. This is the case of the essentially quantum  array $QOA(4,3_C+2_Q,2,2)$,  presented in Eq.(\ref{QOA4522}). 
In general, $A$ is a $k$-unitary permutation matrix if and only if the state $|\phi\rangle$ 
defined by (\ref{phia2}) is an $\AME(2k,d)$  state with a minimal support. 

In the same way, a $3$-unitary matrix of size $d^3$ defines a set of 3 MOQLC.  As an interesting observation, from Eq.(\ref{phia2}) we realize that any $\AME(2k,d)$ state can be associated with a QOA having at least $N_C=k$ classical columns and the minimal possible number of rows $r=d^{k}$, i.e., it always has index unity.\bigskip

Let us generalise these finding in the following proposition.
\begin{prop}\label{prop5}
A a $k$-unitary matrix $A$ of order $d^k$ defines $m=2k$ MOQLH of size $d$ in dimension $k$. Even more, if $A$ is a permutation matrix then the MOQLH are equivalent to a classical set of MOLH.
\end{prop}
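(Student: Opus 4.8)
The plan is to route everything through the quantum orthogonal array machinery, using the equivalence recalled just above: a matrix $A$ of order $d^k$ is $k$-unitary if and only if the state $\ket{\phi}$ of \eqref{phia2} is $\AME(2k,d)$. First I would build, with $\vec n=(n_1,\dots,n_k)$,
\begin{equation*}
\ket{\phi_A}=\bigl(\mathbb{I}_{d^k}\otimes A\bigr)\sum_{\vec n}\ket{\vec n}\ket{\vec n}=\sum_{\vec n}\ket{\vec n}\,A\ket{\vec n},
\end{equation*}
and record that, by that equivalence, $\ket{\phi_A}$ is $\AME(2k,d)$, i.e.\ a $k$-uniform state of $N=2k$ parties assembled from exactly $r=d^k$ terms, hence of index unity.

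Next I would read $\ket{\phi_A}$ as a quantum arrangement in which $\vec n$ labels a row, the first $k$ columns carry the computational (and therefore classical) address $\ket{\vec n}$, and the remaining columns carry the $k$-qudit value $A\ket{\vec n}$. Because $\ket{\phi_A}$ is $k$-uniform, Proposition~\ref{prop2} read in reverse certifies that this arrangement is a $\QOA(d^k,2k,d,k)$ of index unity. Feeding it to Proposition~\ref{prop3} --- ``the first $k$ classical columns address the location and the remaining columns fix the values'' --- then delivers the claimed mutually orthogonal quantum Latin hypercubes of size $d$ in dimension $k$, the entry sitting at site $\vec n$ being the state $A\ket{\vec n}$.

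For the second claim I would specialise to $A$ a permutation matrix. Then each $A\ket{\vec n}$ is a single element of the computational basis, so $\ket{\phi_A}$ has minimal support and every column of the arrangement is classical; the QOA collapses to an ordinary $\OA(d^k,2k,d,k)$ of index unity. The classical dictionary recalled in Section~\ref{S2}, together with the Observation that a QLS with computational entries is a classical LS, then identifies the hypercubes with an honest set of MOLH, which is precisely the asserted equivalence.

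The routine ingredient is orthogonality of the entries (item~1 of Definition~\ref{defi6}): it is immediate, since $A$ unitary sends the computational basis of $\mathcal{H}_d^{\otimes k}$ to an orthonormal one. The crux is item~2 --- that every edge sum $\sum_{i_s}A\ket{c_1,\dots,i_s,\dots,c_k}=A\ket{c_1,\dots,+,\dots,c_k}$, with $\ket{+}=\sum_i\ket{i}$ inserted in slot $s$, is $1$-uniform. Plain unitarity of $A$ is not enough here; one has to extract this from the full family of $\binom{2k}{k}$ reordering unitarities packaged in $k$-unitarity, by tying the single-party reductions of each edge sum to reductions of $\ket{\phi_A}$ onto appropriate $k$-element sets of parties. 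Pinning down exactly which reordering conditions force each edge sum to be maximally mixed on every single party is the delicate point, and it is where the multiunitary hypothesis genuinely earns its keep; should $k$-unitarity alone not suffice, I would first recoordinate the array by local unitaries on its columns --- permitted by QOA equivalence --- so as to bring $A$ to a canonical form before extracting the hypercubes.
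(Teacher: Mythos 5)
Your route is essentially the paper's own proof: $k$-unitarity gives the $\AME(2k,d)$ state of Eq.~(\ref{phia2}), which is read as a $\QOA(d^k,2k,d,k)$ and then converted to MOQLH via Propositions~\ref{prop2} and~\ref{prop3}, with the permutation case handled by observing that the support becomes minimal (the paper defers this last point to Ref.~\cite{GALRZ15}). Your closing worry about which reordering unitarities enforce the edge-sum $1$-uniformity is a fair observation, but it does not alter the argument's structure; the paper leaves that step at the same level of detail.
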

\begin{proof} A $k$-unitary matrix $A$ of size $d^k$ defines an AME state composed 
by $N=2k$ subsystems with $d$ levels each. 
Due to Proposition \ref{prop2} such a state defines 
a QOA($d^k,2k,d,k$) and according to Propostion \ref{prop3}
  implies a MOQLH of size $d$ in dimension $k$. The last implication was already proven in Ref. \cite{GALRZ15}.
\end{proof}

In terms of bipartite quantum gates \cite{CGSS05}, the fact that the classical problem of 36 officers
has no solution,
implies that there is no multi-unitary permutation matrix of size $6^2=36$. That is, there is no permutation  matrix $P_{36}$ of order $36$ such that its partial transpose $P_{36}^{T_2}$ and its reshuffling 
$P_{36}^R$ are both unitary (for an explicit definition of $T_2$ and $R$ see Appendix B in Ref. \cite{GALRZ15}). As a generalization to quantum mechanics, there exists a solution of \emph{36 quantum officers of Euler} 
 if and only if a multi-unitary matrix of size $36$ exists. 
Multi-unitary matrices are relevant in quantum information theory as they saturate the 
upper bound of the entangling power \cite{GALRZ15,ZZF00,Za01}. We remark that on one hand Conjecture \ref{conje2} is consistent with earlier observations by Clarisse \emph{et al.} \cite{CGSS05} and by recent numerical investigations \cite{Puch17,Br17,Lak17}. On the other hand, 
existence of AME(4,6) state cannot be excluded
by applying the currently known bounds for AME states \cite{S04,HESG17,H17}, 
so this interesting problem remains still open.

\section{AME states from quantum orthogonal arrays}\label{S6}
As we have seen in Proposition \ref{prop2}, quantum arrays
$\QOA(r,N,d,k)$ imply existence of $k$-uniform states for $N$ qudit systems having $d$ levels each. In this section, we derive $k$-uniform states with maximal possible value $k=\lfloor N/2\rfloor$ for $N=5$ and arbitrary $d\geq2$ from QOA. Those states determine AME states for $5$-qudit systems.

Let us present a simple construction for AME($5,d$) states for every $d\geq2$ derived from QOA. These states are known to exist \cite{R99} but its explicit closed form has not been presented before, as far as we know. We first define the state
\begin{equation}\label{AME3d}
\AME(3,d)=\sum_{i=0}^{d-1}\ket{i,j,i+j},
\end{equation}
which has associated a classical array $\IrOA(d^2,3,d,1)$. Heres and from now on, sums inside kets is understood to be modulo $d$. By considering this state 
and the generalized Bell basis for 2-qudit systems, 
we are going to construct a QOA composed of 5 columns and $d^2$ rows that defines an $\AME(5,d)$ state for every integer $d$. The first three classical columns of the quantum arrangement are induced by the state (\ref{AME3d}), whereas the remaining two essentially quantum columns are given by elements of the Bell basis
\begin{equation}\label{Bellbasis} 
\ket{\phi_{i,j}}= \sum_{l=0}^{d-1} \omega^{il} \ket{l+j,l},
\end{equation}
where $\omega=\e^{2\pi \mathrm{i} / d}$. We are now in position to establish the following result.
\begin{prop}\label{prop6}
The following three existing quantum objects, determined by a collection
of $d^2$ states $|\phi_{i,j}\rangle \in {\cal H}_{d}^{\otimes 2}$
 are equivalent:
\begin{itemize}
\item[(A)] $\QOA(d^2,3_C+2_Q,d,2)$
\begin{equation}\label{QOA5d}
\begin{array}{cccc}
|0\rangle&|0\rangle&|0\rangle&|\phi_{0,0}\rangle\\
|0\rangle&|1\rangle&|1\rangle&|\phi_{0,1}\rangle\\
\vdots&\vdots&\vdots&\vdots\\
|d-1\rangle&|d-1\rangle&|d-2\rangle&|\phi_{d-1,d-1}\rangle\\
\end{array}.
\end{equation}
\item[(B)] Triple of MOQLS of size $d$
\begin{equation}\label{OAdd5d2}
\begin{array}{ccc}
\ket{0}|\phi_{0,0}\rangle&\dots&\ket{d-1}|\phi_{0,d-1}\rangle\\
\vdots&\ddots&\vdots\\
\ket{d-1}|\phi_{d-1,0}\rangle&\dots&\ket{d-2}|\phi_{d-1,d-1}\rangle\\
\end{array}.
\end{equation}
\item[(C)] Quantum state
\begin{equation} \label{AME5d}
\AME(5,d)= \sum_{i,j=0}^{d-1} \ket{i,\, j,\, i+j} \ket{\phi_{i,j}},
\end{equation}
for any integer $d\geq2$. 
\end{itemize}
\end{prop}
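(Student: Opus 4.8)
The plan is to establish the three stated equivalences by showing that each object encodes the same data, with the quantum state $(C)$ serving as the central object from which both the QOA $(A)$ and the triple of MOQLS $(B)$ are read off. The chain of reasoning divides naturally according to the general machinery already assembled in the excerpt: Proposition~\ref{prop2} links QOA to $k$-uniform states, and Proposition~\ref{prop3} links QOA to MOQLH. Thus the genuinely new content here is to verify that the \emph{specific} arrangement built from the state $|\phi_{i,j}\rangle$ of \eqref{Bellbasis} together with the classical addressing from $\AME(3,d)$ actually forms a valid $\QOA(d^2,3_C+2_Q,d,2)$.

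First I would verify equivalence of $(A)$ and $(C)$. By construction, summing the rows of the arrangement \eqref{QOA5d} produces exactly the state \eqref{AME5d}, since row $(i,j)$ contributes the term $\ket{i,j,i+j}\ket{\phi_{i,j}}$; this is the reading-off prescription used throughout Section~\ref{S4}. Conversely, by Proposition~\ref{prop2}, if \eqref{QOA5d} is a genuine $\QOA(d^2,5,d,2)$ then \eqref{AME5d} is $2$-uniform on five qudits, hence $\AME(5,d)$ since $k=\lfloor 5/2\rfloor=2$. So the crux reduces to checking the defining POVM condition of Definition~\ref{QOAdef}: every reduction to $k=2$ of the five columns must sum to the identity on $\mathcal{H}_d^{\otimes 2}$. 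This is most cleanly done by verifying directly that every two-party marginal of \eqref{AME5d} is maximally mixed, which is equivalent. I would organize this check by the type of bipartition: the pair drawn from the three classical columns $\{1,2,3\}$ is maximally mixed because $\AME(3,d)$ is already $1$-uniform with the relation $i+j$ making any two of the three coordinates free and independent; the pair involving one classical and one Bell-register qudit, and the pair lying inside the two Bell qudits, require using the explicit form \eqref{Bellbasis}. The key computational fact is that $\{|\phi_{i,j}\rangle\}_{i,j=0}^{d-1}$ is an orthonormal (generalized Bell) basis of $\mathcal{H}_d^{\otimes 2}$, so that $\sum_{i,j}|\phi_{i,j}\rangle\langle\phi_{i,j}| = \mathbb{I}_{d^2}$, and moreover each $|\phi_{i,j}\rangle$ is itself maximally entangled so its single-party reductions are maximally mixed.

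For the equivalence of $(A)$ and $(B)$ I would invoke Proposition~\ref{prop3}: a $\QOA(d^k,k+m,d,k)$ with $k=2$ and $m=3$ yields $m=3$ MOQLH of dimension $k=2$, i.e.\ a triple of MOQLS, where the first $k=2$ classical columns address the entries and the remaining $m=3$ columns supply the values. Reading \eqref{QOA5d} this way places at position $(i,j)$ the entry formed from the third classical column $\ket{i+j}$ and the Bell state $\ket{\phi_{i,j}}$, which is exactly the entry displayed in \eqref{OAdd5d2}; I would note the row-indexing convention here so that the entry at row $i$, column $j$ reads $\ket{i+j}\ket{\phi_{i,j}}$. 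Thus $(B)$ is nothing but the MOQLH extracted from $(A)$ via Proposition~\ref{prop3}, and the equivalence is immediate once $(A)$ is established.

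The main obstacle is the two-party marginal computation establishing $(A)$, specifically the mixed classical--Bell bipartitions, because there the maximal mixedness is not inherited from a single structural property but from an interplay between the $i+j$ arithmetic in the classical registers and the phase structure $\omega^{il}$ of \eqref{Bellbasis}. I expect the cleanest route is to fix the two traced-out positions, carry out the partial trace as a sum over matched row-pairs $(i,j)$ and $(i',j')$, and show that the surviving coherences vanish unless the retained indices coincide; the orthogonality relations $\langle\phi_{i,j}|\phi_{i',j'}\rangle=\delta_{i,i'}\delta_{j,j'}$ together with the completeness of $\{|i+j\rangle\}$ as $j$ ranges over $\{0,\dots,d-1\}$ for fixed $i$ will force the off-diagonal terms to cancel. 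All remaining steps are bookkeeping once this single nontrivial marginal is handled.
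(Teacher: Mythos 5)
Your proposal is correct and follows essentially the same route as the paper: the equivalences $(A)\Leftrightarrow(C)$ and $(A)\Leftrightarrow(B)$ are delegated to Propositions~\ref{prop2} and~\ref{prop3}, and the real work is verifying that \eqref{QOA5d} is a genuine QOA, which the paper phrases as the two structural conditions of \emph{uniformity} (every pair of columns yields an orthonormal basis) and \emph{irredundancy} (every reduction to three columns has orthogonal rows), while you phrase it as the equivalent explicit computation of the two-party marginals using the orthonormality and completeness of the Bell basis $\{\ket{\phi_{i,j}}\}$ together with the $i+j$ addressing. The only point to watch is your passing claim that the POVM condition of Definition~\ref{QOAdef} is ``equivalent'' to maximal mixedness of the state's marginals---these differ by cross terms---but your plan explicitly kills those cross terms via irredundancy, so the argument is sound.
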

\begin{proof}
Proof of \emph{(A)} follows from two facts: \emph{(i)} every subset of two columns produces an orthonormal basis \emph{(ii)} every reduction to three columns contains orthogonal rows. These conditions ensure that every reduction to two columns produces a POVM. These two properties are an extension of the so-called \emph{uniformity} and \emph{irredundancy}, considered to construct $k$-uniform states from classical OA (see Section IV in Ref. \cite{GZ14}). Equivalence between \emph{(A)} and \emph{(C)}
follows directly from Propostion \ref{prop2}, while the last relation between
\emph{(A)} and \emph{(B)} can be obtained by Propostion \ref{prop3}.
\end{proof}
 
For instance, in the case of $d=2$, this construction reduces to QOA (\ref{QOA4522}), MOQLS (\ref{3MOQLS2}) and AME(5,2) state (\ref{AME52}). Note that the QOA (\ref{QOA5d}) has its last two columns entangled, implying that MOQLS (\ref{OAdd5d2}) are necessarily entangled and AME state (\ref{AME5d}) does not have minimal support. This is consistent with the summary of results presented at the end of Section \ref{S4}.

\begin{obs}
QOA allow us to  add a classical column to arrangement (\ref{QOA5d}) in order to define the following 2-uniform states of 6 qudits, i.e., 
\begin{equation}\label{psi6d}
|\psi(6,d)\rangle=\sum_{i,j=0}^{d-1}\ket{i,j,i+j,i+2j}\ket{\phi_{i,j}},
\end{equation}
where $d$ is an odd prime number
and both sums in kets are taken modulo $d$.\medskip \\
When $d$ is a prime power number, it is convenient to use a polynomial representation based on irreducible polynomials. 
In such cases, the $2$-uniform states of $6$ qubits can be written as
\begin{equation}
|\psi(6,d)\rangle=\sum_{i,j=0}^{d-1}\ket{i,j,i+j,i+a_1 j}\ket{\phi_{i,j}}, \nonumber
\end{equation}
where $a_1$ is the first element of the finite set using the polynomial representation for which $a_1 \neq 0,1$.
\end{obs}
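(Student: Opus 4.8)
The final statement is the Observation asserting that

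\begin{equation}\label{psi6d-plan}
|\psi(6,d)\rangle=\sum_{i,j=0}^{d-1}\ket{i,j,i+j,i+2j}\ket{\phi_{i,j}}
\end{equation}

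is a $2$-uniform state of six qudits when $d$ is an odd prime (with the analogous polynomial version $i+a_1 j$ for prime powers). The plan is to apply Proposition~\ref{prop2}: it suffices to exhibit a $\QOA(d^2,6,d,2)$ whose row sum equals \eqref{psi6d-plan}. The natural candidate is the arrangement obtained from \eqref{QOA5d} by appending one extra classical column, placing the symbol $i+2j \pmod d$ in row $(i,j)$. So the real content is to verify that this six-column object satisfies Definition~\ref{QOAdef}, i.e.\ that every reduction to $k=2$ of the six columns sums to the identity on $\mathcal{H}_d^{\otimes 2}$.

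First I would separate the columns into the four classical ones, carrying symbols $i$, $j$, $i+j$, $i+2j$, and the two entangled ones carrying $\ket{\phi_{i,j}}$. Proposition~\ref{prop6} already guarantees the QOA condition for every pair drawn from columns $\{1,2,3\}\cup\{4,5\}_Q$ in the five-column case, so the only genuinely new pairs to check are those involving the fifth classical column $i+2j$. These split into two types. For a pair of two classical columns, the condition reduces to the purely combinatorial statement that the map $(i,j)\mapsto(\alpha i+\beta j,\gamma i+\delta j)\pmod d$ is a bijection of $\mathbb{Z}_d^2$, which holds precisely when $\alpha\delta-\beta\gamma$ is invertible mod $d$. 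For a pair consisting of the new classical column and one Bell column, I would compute the reduced POVM element and show the $d^2$ contributions tile the identity.

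Concretely, the classical $2\times2$ coefficient matrices arising among the columns $\{i,\,j,\,i+j,\,i+2j\}$ are built from rows of $\left(\begin{smallmatrix}1&0\\0&1\\1&1\\1&2\end{smallmatrix}\right)$, and every pair of distinct rows must have determinant invertible modulo $d$; the only nontrivial determinants are those involving the last row, namely $2$, $2-1=1$, and $1$, all units when $d$ is an odd prime. This is exactly where the hypothesis $d$ odd prime enters, and where $2$ must be invertible, explaining the passage to the polynomial coefficient $a_1\neq0,1$ for general prime powers. For the mixed pairs (classical column $i+2j$ together with a Bell column), I would use the explicit form $\ket{\phi_{i,j}}=\sum_{l}\omega^{il}\ket{l+j,l}$ to evaluate $\sum_{i,j}\ketbra{i+2j}{i+2j}\otimes \Tr_{\text{one party}}\ketbra{\phi_{i,j}}{\phi_{i,j}}$ and check it is proportional to $\mathbb{I}_d\otimes\mathbb{I}_d$; the phase structure of the Bell basis already ensures each single-party reduction of $\ket{\phi_{i,j}}$ is maximally mixed, so summing over the fibres of the bijective index map yields identity.

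The main obstacle is not any single calculation but organizing the six-column verification so that the new classical column genuinely interacts correctly with \emph{both} the remaining classical columns and the entangled block, and in particular making transparent why invertibility of $2$ modulo $d$ is the precise arithmetic requirement. Once the determinant condition is isolated as the sole constraint, the odd-prime hypothesis (and its prime-power refinement via $a_1$) follows immediately, and the $2$-uniformity is then a direct consequence of Proposition~\ref{prop2}.
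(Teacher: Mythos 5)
Your proposal is correct and follows the same route as the paper, which simply asserts that the six-column arrangement is a $\QOA(d^2,4_C+2_Q,d,2)$ ("it is simple to check") and then invokes Proposition \ref{prop2}; you supply the verification the paper omits, correctly isolating the determinant condition (the pair $(i,\,i+2j)$ giving determinant $2$) as the sole place where oddness of $d$ enters, and noting that the maximally mixed single-party reductions of the Bell states handle all mixed classical--quantum pairs. The only trivial slips are cosmetic: one of your three determinants should be $-1$ rather than $1$ (harmless, both are units), and the index map $(i,j)\mapsto c_{ij}$ for a single classical column is $d$-to-one rather than bijective, which is exactly what makes each symbol occur $d$ times.
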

Here, note that the classical and quantum parts of the underlying QOA are composed of four and two columns, respectively. It is simple to check that this arrangement is a QOA($d^2,4_C+2_Q,d,2$).\medskip

In the constructions presented above, the key point was to produce a QOA from combining a classical OA and an orthonormal basis composed of generalized Bell states. It is simple to realize that multiplication of quantum columns produce another QOA having a larger number of columns. For example, the QOA (\ref{QOA4522}) can be extended by considering $m$ copies of the quantum part in the following way:
\begin{equation}\label{oa:qoa}
\begin{array}{llll}
1&1&1&\ket{\Phi^+}\hspace{0.1cm}\dots\hspace{0.17cm}\ket{\Phi^+}\vspace{0.03cm}\\
0&0&1&\ket{\Phi^-}\hspace{0.1cm}\dots\hspace{0.17cm}\ket{\Phi^-}\vspace{0.03cm}\\
0&1&0&\ket{\Psi^+}\hspace{0.1cm}\dots\hspace{0.15cm}\ket{\Psi^+}\vspace{0.03cm}\\
1&0&0&\underbrace{\ket{\Psi^-}\hspace{0.1cm}\dots\hspace{0.1cm}\ket{\Psi^-}}_{m}
\end{array},
\end{equation}
which produces a 2-uniform state of $3+2m$ qubit systems. Furthermore, constructions (\ref{AME5d}) and (\ref{psi6d}) can be generalized in the same way. That is, we construct 2-uniform states for an odd number of $N=5+2m$ qudits
\begin{equation*} 
|\psi(5+m,d)\rangle= \sum_{i,j=0}^{d-1} \ket{i,\, j,\, i+j} \underbrace{\ket{\phi_{i,j}}\cdots\ket{\phi_{i,j}}}_{m},
\end{equation*}
and also 2-uniform states for an even number of $N=6+2m$ qudits
\begin{equation*}
|\psi(6+m,d)\rangle=\sum_{i,j=0}^{d-1}\ket{i,j,i+j,i+2j}
\underbrace{\ket{\phi_{i,j}}\cdots\ket{\phi_{i,j}}}_{m},
\end{equation*}
where $d$ is a prime number. As we described in \eqref{psi6d}, when $d$ is a prime power we should consider the set of polynomial representation of the finite sets. For these constructions it is straightforward to check that every reduction to two parties forms a POVM.\medskip

We recently learned that QOA composed by six columns exist for any prime number of levels $d$. By using qudit graph states \cite{H13}, the following solution can be found \cite{HW17}
for any prime number of levels $d$:
\begin{equation}\label{psi6d}
|AME(6,d)\rangle=\sum_{i_1,i_2,i_3=0}^{d-1}\ket{i_1,i_2,i_3}\ket{\phi_{i_1,i_2,i_3}},
\end{equation}
where 
\begin{equation}
\ket{\phi_{i_1,i_2,i_3}}=\sum_{i_4,i_5,i_6=0}^{d-1}\omega^{A_{i_1,\dots,i_6}}|i_4,i_5,i_6\rangle,
\end{equation}
with $\omega=e^{2\pi i/d}$ and
\begin{eqnarray}
A_{i_1,\dots,i_6}&=&i_1i_2+i_2i_3+i_3i_4+i_4i_5+i_5i_6+i_6i_1\nonumber\\
&&+i_1i_3+i_4i_6+i_2i_5.
\end{eqnarray}
Note that these states determine the $d^3$ entries of three MOQLC of a prime size $d$. Furthermore, these states also imply the existence of a 3-unitary complex Hadamard matrix of size $d^3$ whose entries are given by $M_{\mu,\nu}=\omega^{A_{\mu,\nu}}$, where $\mu=d^2i_1+d i_2+i_3$ and $\nu=d^2i_4+d i_5+i_6$, with $\mu,\nu=0\dots d^3-1$.

\begin{figure*}
\centering 
{\includegraphics[width=14cm]{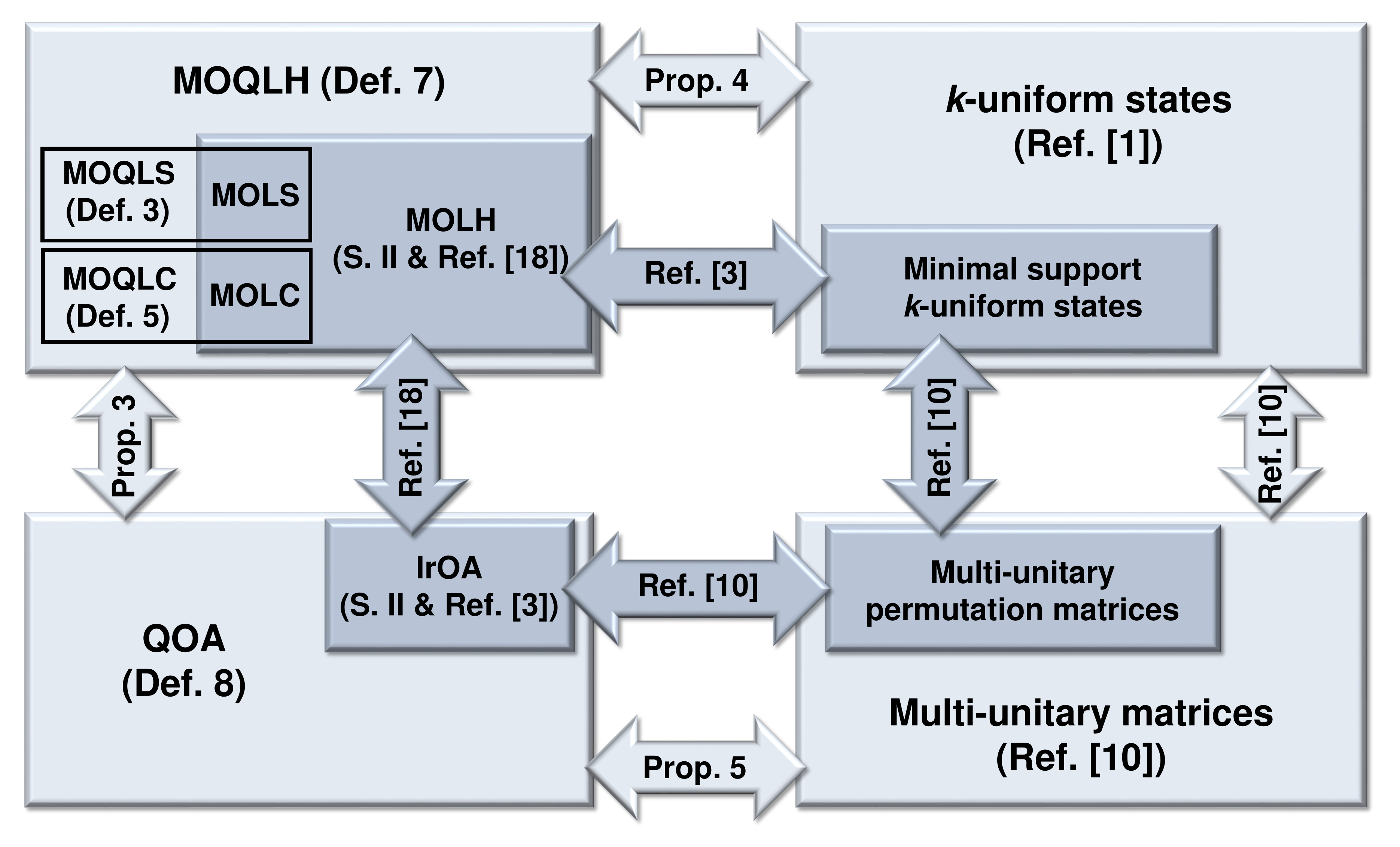}}
\caption{Connection between the quantum combinatorial arrangements introduced in this work, i.e. mutually orthogonal quantum Latin arrangements (see Section \ref{S3}) and quantum orthogonal arrays (see Section \ref{S4}). This also shows the connections existing between our findings, $k$-uniform states and multi-unitary matrices.}
\label{Fig3}
\end{figure*}

\section{Summary and conclusions}\label{S7}
A generalization of classical combinatorial arrangements to quantum mechanics has been established. We introduced the notion of quantum Latin squares (QLS), quantum Latin cubes (QLC), quantum Latin hypercubes (QLH) and established a suitable notion of orthogonality between them -- see Section \ref{S3}. We also introduced 
the notion of quantum orthogonal arrays (QOA) in Section \ref{S4}, that generalizes all the classical and quantum arrangements studied in Sections \ref{S2} and \ref{S3}. Moreover, we derived quantum Latin arrangements from QOA in the same way as classical Latin arrangements can be obtained 
from classical OA -- see Prop. \ref{prop3}.

Our findings allowed us to realize that a pair of orthogonal quantum Latin arrangements not necessarily
 implies existence of two separated arrangements satisfying an orthogonality criteria. 
Indeed, \emph{orthogonal Latin arrangements can be entangled in the same way as quantum states are entangled} --
see for instance Eqs.(\ref{3MOQLS2}) and (\ref{cube}). This astonishing property is one-to-one related to the fact that columns of QOA can be entangled -- see Eqs.(\ref{QOA4522}) and (\ref{QOA8623}). This turned out to be a crucial property in order to reproduce some classes of highly entangled multipartite states, so-called AME states with non-minimal support, for instance the states AME(5,2) and AME(6,2) consisting of five and six qubits, see Eqs. (\ref{AME52}) and (\ref{AME62}), respectively.

Furthermore, QOA define $k$-uniform states -- see Proposition \ref{prop2}. We demonstrated that $k$-uniform states constructed from quantum Latin arrangements have high persistency of entanglement, which makes them ideal candidates for quantum information protocols -- see Prop. \ref{prop4}. We also established a link between multi-unitary matrices and mutually orthogonal Latin arrangements, see Prop. \ref{prop5}. 

We constructed three genuinely entangled MOQLS of size $d$, QOA composed of five columns and an arbitrary number $d$ of internal levels and AME states for five parties with $d$ levels each, for every $d\geq2$ -- see Proposition \ref{prop6}. This result evidences the usefulness of the quantum combinatorial designs introduced along the work.  

Fig. \ref{Fig3} summarizes the relations existing between the studied concepts and the most relevant results derived along the work.
On one hand, we proposed new mathematical tools and described original techniques to construct
multipartite quantum states with remarkable properties. On the other hand,
we established some further links between problems and objects studied in classical 
combinatorics and quantum theory.
We are tempted to believe that such an approach might be fruitful in future
as it  can lead to further development of 'quantum combinatorics' --
a branch of mathematics  which investigates various 
arrangements composed of elements of the continuous
and connected space of $d$-dimensional quantum states instead 
of elements of a discrete set containing $d$ elements.

\section*{Acknowledgements}
It is a pleasure to thank Arul Lakshminarayan, Zbigniew Pucha{\l}a and Wojciech Bruzda
for letting us know about their attempts to find $2$--unitary matrices of size $36$. 
We thank also to Antonio Ac\'in, Christian Gogolin, Markus Grassl, Felix Huber, 
Markus Johansson, Felix Pollock, Arnau Riera, Jens Siewert and Nikolai Wyderka for valuable discussion on quantum orthogonal arrays, AME states and bi-partite unitary 
quantum gates that maximize the entangling power. We are also grateful to the kind hospitality of \emph{Centro de Ciencias Pedro Pascual}, Benasque, Spain, during the \emph{Workshop on Multipartite Entanglement}, May 2016, where this project started.
D. G. and K. \.Z. are supported by the Narodowe Centrum Nauki under the project number DEC- 2015/18/A/ST2/00274 (Poland)
and by the John Templeton Foundation under the project No. 56033.
Z. R. acknowledges support from the Spanish MINECO (QIBEQI FIS2016-80773-P and Severo Ochoa SEV-2015-0522), Fundacio Cellex, Generalitat de Catalunya (CERCA Program), and ERC CoG QITBOX.
S. D. M. is supported by the ERC (Advanced Grant IRQUAT, project number ERC-267386), Spanish MINECO FIS2013-40627-P and FIS2016-80681-P (AEI/FEDER, UE) and Generalitat de Catalunya CIRIT 2014-SGR-966.


\begin{thebibliography}{99}

\bibitem{S04} A. J. Scott, Multipartite entanglement, quantum-error-correcting codes, and entangling power of quantum evolutions, Phys. Rev. A 69, 052330 (2004).

\bibitem{AC13} L.Arnaud, N. Cerf, Exploring pure quantum states with maximally mixed reductions, Phys. Rev. A 87, 012319 (2013).

\bibitem{GZ14} D. Goyeneche, K. \.{Z}yczkowski, Genuinely multipartite entangled states and orthogonal arrays, Phys. Rev. A 90, 022316 (2014).

\bibitem{FFPP08} P. Facchi, G. Florio, G. Parisi, and S. Pascazio,
Maximally multipartite entangled states,
Phys. Rev. A {\bf 77}, 060304 (2008).

\bibitem{HCLRL12} W. Helwig, W. Cui, J. I. Latorre, A. Riera, H.K. Lo, Absolute maximal entanglement and quantum secret sharing, Phys. Rev. A 86, 052335 (2012).

\bibitem{VDMV02} F. Verstraete, J. Dehaene, B. De Moor, H. Verschelde, Four qubits can be entangled in nine different ways, Phys. Rev. A, 65: 052112 (2002).

\bibitem{PYHP15} F. Pastawski, B. Yoshida, D. Harlow, J. Preskill, Holographic quantum error-correcting codes: Toy models for the bulk/boundary correspondence, JHEP 06, 149 (2015).

\bibitem{HEB04} M. Hein, J. Eisert, H.J. Briegel, Multi-party entanglement in graph states, Phys. Rev. A 69, 062311 (2004).

\bibitem{H13} W. Helwig, Absolutely Maximally Entangled Qudit Graph States, arXiv:1306.2879 [quant-ph] (2013).

\bibitem{GALRZ15} D. Goyeneche, D. Alsina, J. I. Latorre, A. Riera, K. \.{Z}yczkowski, Absolutely maximally entangled states, combinatorial designs, and multiunitary matrices, Phys. Rev. A 92, 032316 (2015).

\bibitem{RK17} Z. Raissi, V. Karimipour, Creating Maximally Entangled States by Gluing, Quant. Inf. Process. 16, 81 (2017).

\bibitem{Hu17} F. Huber, Existence of absolutely maximally entangled states, Problem 35 in the list of Open Quantum Problems, IQOQI Vienna (May 2017), \emph{https://oqp.iqoqi.univie.ac.at/existence-of-absolutely-maximally-entangled-pure-states}

\bibitem{cumulants} P. Facchi, G. Florio, U. Marzolino, G. Parisi, S. Pascazio, Classical statistical mechanics approach to multipartite entanglement, J. Phys. A: Math. and Theor. 3, 225303 (2010).

\bibitem{Raissi-2017}
Z. Raissi, C. Gogolin, A. Riera, A. Ac\'{i}n, Constructing optimal quantum error correcting codes from absolute maximally entangled states, J. Phys A: Math. and Theor. 51, 075301 (2017).
 
\bibitem{cactus} S. Di Martino, P. Facchi, G. Florio, Feynman graphs and the large dimensional limit of multipartite entanglement, J. Math. Phys.  59,   012201 (2018).

\bibitem{MV15} B. Musto, J. Vicary, Quantum Latin squares and unitary error bases,  Quant. Inf. \& Comp. 16, 1318--1332 (2016).

\bibitem{M17} B. Musto, Constructing Mutually Unbiased Bases from Quantum Latin Squares, Proc. 13th International Conference on Quantum Physics and Logic, University of Strathclyde (Eds. R. Duncan, C. Heunen), EPTCS 236, 108--126 (2017). 

\bibitem{HSS99} A. S. Hedayat, N. J. A. Sloane, J. Stufken, Orthogonal Arrays: Theory 
and Applications (Springer-Verlag, New York, 1999).

\bibitem{HGS17} F. Huber, O. G\"uhne, J. Siewert, 
Absolutely maximally entangled states of seven qubits do not exist, 
Phys. Rev. Lett. 118, 200502 (2017).

\bibitem{HESG17} F. Huber, C. Eltschka, J. Siewert, O. G\"{u}hne, Bounds on absolutely maximally entangled states from shadow inequalities, and the quantum MacWilliams identity, J. Phys. A: Math and Theor. 51, 17, 175301 (2018).


\bibitem{Helwig-Cui2013}
W. Helwig, and W. Cui, Absolutely Maximally Entangled States: Existence and Applications, arXiv:1306.2536 (2013).

\bibitem{GR15} M. Grassl, M. Roetteler, Quantum MDS Codes over Small Fields, Proc. IEEE International Symposium on Information Theory (ISIT 2015), Hong Kong, 14-19 June 2015 , pp. 1104-1108 (2015). 

\bibitem{B52} K. Bush, Orthogonal arrays of index unity. Ann. Math. Statist. 23, 426--434 (1952).

\bibitem{NC00} M. A. Nielsen and I. L. Chuang, Quantum Computation and Quantum Information Cambridge University Press, Cambridge (2000). 

\bibitem{Gottesman-thesis}
D. Gottesman, Stabilizer codes and quantum error correction, arXiv:quant-ph/9705052 (1997).

\bibitem{KL97}
E. Knill and R. Laflamme, A theory of quantum error correcting codes, Phys. Rev. A, 55:900, (1997).

\bibitem{HS00} A. Higuchi, A. Sudbery, How entangled can two couples get?, Phys. Lett. A 273, 213-217 (2000).

\bibitem{LMPZ96} R. Laflamme, C. Miquel, J. P. Paz, W. H. Zurek, Perfect Quantum Error Correction Code, Phys. Rev. Lett. 77, 198 (1996).

\bibitem{BPBZCP07} A. Borras, A. R. Plastino, J. Batle, C. Zander, M. Casas, and 
A. Plastino, J. Phys. A 40, 13407 (2007).

\bibitem{E82} L. Euler, \emph{Recherches sur une nouvelle espece de quarres magiques}, Verhandelingen uitgegeven door het zeeuwsch Genootschap der Wetenschappen te Vlissingen 9, Middelburg, 85--239 (1782). Also published in Commentationes arithmeticae 2, 302--361 1(849). Available online in \emph{The Euler Archive}: \url{http://eulerarchive.maa.org/pages/E530.html}

\bibitem{BR01} H. J. Briegel, R. Raussendorf, Persistent entanglement in arrays of interacting particles, Phys. Rev. Lett. 86, 910 (2001).

\bibitem{Popescu} S. Popescu, unpublished notes.

\bibitem{DVC00} W. D\"{u}r, G. Vidal, J. I. Cirac, Three qubits can be entangled in two inequivalent ways, Phys. Rev. A 62, 062314 (2000).

\bibitem{CGSS05} L. Clarisse, S. Ghosh, S. Severini and A. Sudbery
Entangling Power of Permutations
Phys. Rev. A 72, 012314 (2005).

\bibitem{ZZF00} P. Zanardi, Ch. Zalka, and L. Faoro, 
Entangling power of quantum evolutions, 
Phys. Rev. A 62, 030301 (2000).

\bibitem{Za01} P. Zanardi, Entanglement of quantum evolutions, 
Phys. Rev. A 63, 040304 (2001).

\bibitem{Puch17} Z. Pucha{\l}a, private communication, 2016.

\bibitem{Br17} W. Bruzda, private communication, 2017.

\bibitem{Lak17} A. Lakshminarayan, private communication, 2017.

\bibitem{H17} F. Huber, N. Wyderka, Table of absolutely maximally entangled states, 
 http://www.tp.nt.uni-siegen.de/+fhuber/ame.html

\bibitem{R99} E. Rains, Nonbinary quantum codes, IEEE Trans. Inf. Theory 45, 6, 1827 (1999).

 \bibitem{HW17} F. Huber and N. Wyderka, private communication, September 2017.

\end{thebibliography}
\end{document}